\ifdefined\XeTeXversion\else\pdfoutput=1\fi

\documentclass[10pt,twocolumn,letterpaper]{article}

\usepackage[pagenumbers]{cvpr}

\usepackage[dvipsnames]{xcolor}
\definecolor{cvprblue}{rgb}{0.21,0.49,0.74}
\usepackage[pagebackref,breaklinks,colorlinks,citecolor=cvprblue]{hyperref}

\def\paperID{47}
\def\confName{3DV\xspace}
\def\confYear{2027\xspace}

\graphicspath{{figures/}}

\newtheorem{lemma}{Lemma}

\title{Topology-Aware Differentiable Triangle-Soup Reconstruction\\
       via Persistent Homology}

\author{Viritphon Chongpermwattanapol$^{*}$ \qquad Nattapat Damnernyut \qquad
  Pizzanu Kanongchaiyos\\[4pt]
  Department of Computer Engineering, Faculty of Engineering,\\
  Chulalongkorn University, Bangkok, Thailand\\[2pt]
  \texttt{viritphon.1234@gmail.com}\quad\texttt{starter2157@gmail.com}\quad\texttt{pizzanu.k@chula.ac.th}
  \\[6pt]\normalfont\small $^{*}$Corresponding author.\quad Preprint:
  main paper (\S1--\S9) plus the supplementary material as
  Appendices~A--H.}

\begin{document}

\makeatletter
\twocolumn[{%
  \@maketitle
  \centering
  \vspace{-1.9em}
  \captionsetup{type=figure}
  \includegraphics[width=0.74\textwidth]{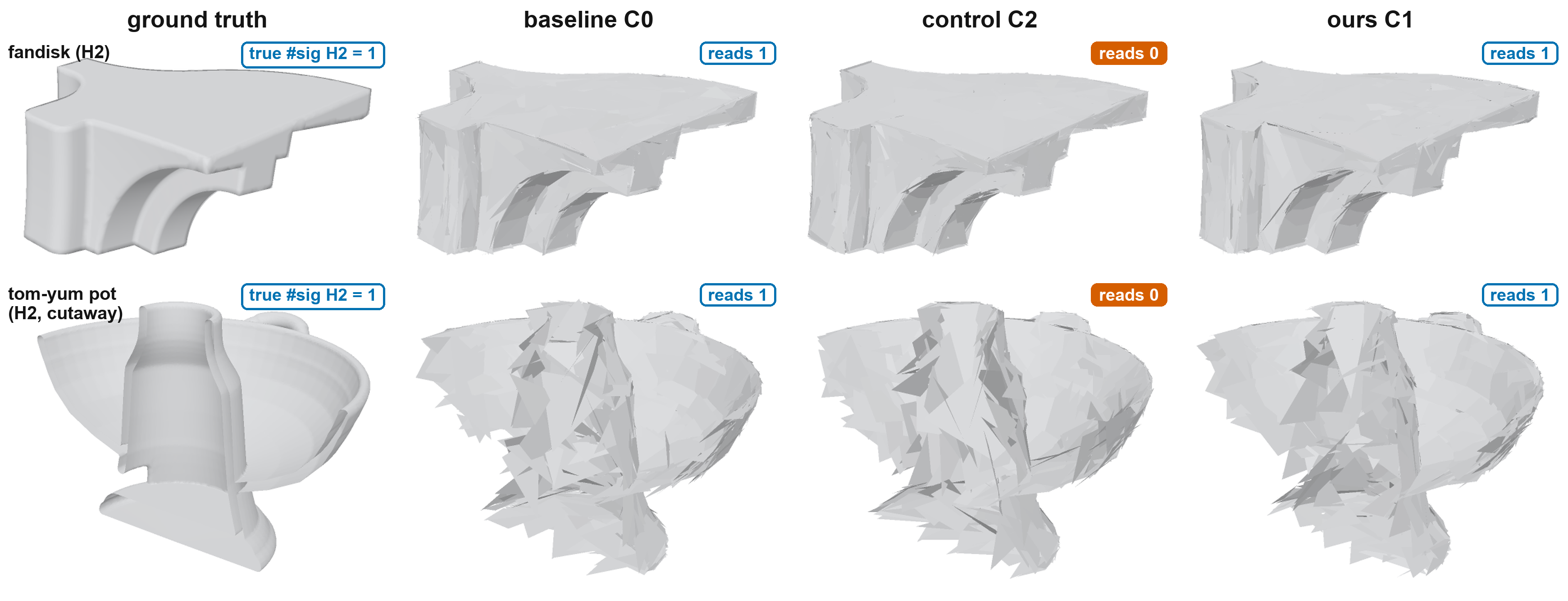}
  \captionof{figure}{Geometry alone cannot distinguish topology;
  measurement can. Final reconstructions (seed 0), identical budgets: ground truth;
  baseline (C0); the norm-matched \emph{non-topological} repulsion
  control (C2); the persistence loss (C1, ours) --- equal-or-better
  Chamfer on every row (\S\ref{sec:generality}). A topological error need not be visible, so
  each cell is badged with its \emph{measured} significant-feature
  count (blue outline = correct, filled orange = wrong; identical
  across seeds): the control erases both voids --- and collapses a loop
  on bob (suppl.\ Fig.~S8) --- while the loss reads correct everywhere,
  cutting tail error up to $10.4\times$ (group means
  $1.5\times$/$4.9\times$; Tables~\ref{tab:main}--\ref{tab:gen}).}%
  \label{fig:matrix}
  \vspace{-0.1em}
}]
\makeatother

\begin{abstract}
Differentiable triangle-soup reconstruction inherits a limitation from its
objective: photometric and geometric losses cannot measure topology, so a
reconstruction with a collapsed loop or a punctured enclosed void can
score exactly as well as a correct one (on Chamfer-equal probes the
diagrams differ $35$--$40\times$ in bottleneck distance).
The standard implicit remedy --- steer \emph{where} the resampler
spends its budget --- does not repair this: in a controlled allocation
study, a topology-informed prior is largely matched by an equally wide
random one, and no prior shape repairs loops.
We therefore move topology into the objective: a differentiable
persistence term compares the evolving surface's diagram, measured on
live surface samples, to a fixed target; gradients flow through a
pair-frozen backward re-expressing matched birth/death simplices as
closed-form circumradii (exact for Gabriel simplices, verified per
pairing), plus a recruitment term restoring the gradient optimal
matching provably lacks when a target feature is missing; one ratio
knob calibrates the loss against the photometric gradient, no
curriculum needed.
Every claim passes a channel-controlled verdict: the loss must beat a
norm-matched \emph{non-topological} control through the identical
gradient channel, at Chamfer parity. Under that rule the loss is
topology-specific for enclosed voids ($4.0$--$7.9\times$ lower error) and
--- the class every allocation prior failed --- for loops ($2.3\times$,
zero phantom handles, while the control collapses one); loss and prior
compose; component counts (H0) are a null result, matched by the
generic control. The verdicts replicate without per-shape tuning on eight external
genus-known meshes in two observable-class groups (pre-registered group
means: loops $1.52\times$, voids $4.87\times$; one a raw artist mesh,
ingested and \emph{certified}; the one non-pass is a no-headroom null,
its already-correct baseline left unharmed), degrading gracefully under
sensor noise.
Scope: all evidence is synthetic and single-machine --- closed surfaces
plus a first open-surface probe --- with the target diagram known in
advance, the regime where correction can be measured exactly; real
scans and blind capture are future work.
Prescription: correct topology in the loss, allocate wide, combine.
\end{abstract}

\section{Introduction}
\label{sec:intro}

Differentiable rasterization reconstructs a triangle soup from images by
following photometric gradients~\citep{tojo2026diffsoup}, periodically
resampling under a fixed budget. Photometric losses are
topologically blind: reconstructions differing in genus, connectivity,
or enclosed voids can be indistinguishable to Chamfer and Hausdorff
while persistence diagrams separate them $35$--$40\times$ in bottleneck
distance on Chamfer-equal probes (\S\ref{sec:related}; suppl.\
Table~S1). Fig.~\ref{fig:matrix} previews
the consequence: identical budgets, equal-or-better Chamfer, topology
standing or falling with the objective.

A decade of connectivity-free reconstruction --- point splats,
Gaussians, and lately a return to triangle soups for their native
graphics-pipeline
fit~\citep{held2025trianglesplatting,burgdorfer2025radiant,tojo2026diffsoup}
--- has treated topology errors as symptoms of geometric
under-provision: represent the surface better, or allocate the budget
more wisely, and topology should follow. We tested the allocation half first (suppl.\ \S A).
It fails: a topological prior helps enclosed voids, but any equally-wide
\emph{non-topological} prior reproduces most of the gain, and no prior
shape repairs loops --- concentrating manufactures phantom handles,
spreading never beats a random control.

The null result relocates the root cause: not \emph{where} triangles are
spent, but \emph{what the objective measures} --- a photometric loss is
a local, pixel-space signal; a loop or an enclosed void is a global,
categorical property of the surface. This paper therefore moves topology
into the \emph{objective}: a differentiable persistence term whose
gradients act directly on features' birth/death values rather than on
where triangles respawn.

Three research hypotheses structure the study, each tested by a
pre-designed condition:
\begin{enumerate}
\setlength{\itemsep}{1pt}
\item \textbf{RH1 (diagnosis)}: topology errors originate in the
  training objective, not in primitive allocation --- tested by the
  allocation study above.
\item \textbf{RH2 (specificity)}: a persistence term improves topology
  in ways no norm-matched non-topological regularizer through the
  identical channel can reproduce --- the control C2
  (\S\ref{sec:conditions}).
\item \textbf{RH3 (complementarity)}: objective and allocation are
  complementary channels, not substitutes --- the composition C5.
\end{enumerate}
Concretely, we contribute \textbf{a matched persistence loss with a
pair-frozen backward} (\S\ref{sec:method}): exact persistence forward
(GUDHI alpha complex), backward through closed-form circumradii ---
Gabriel-exact for 100\% of significant pairs in practice;
\textbf{a recruitment term for missing features}: optimal matching
yields zero gradient when a target feature has no live counterpart;
recruitment attaches the nearest live bar, and removing it collapses
the loop-class win to baseline (\S\ref{sec:results}); \textbf{one calibrated knob}: $\lambda$ set by a
gradient-norm ratio $\rho$, flat over a $10\times$ range, no curriculum
needed --- an ablation, not an assumption;
and \textbf{channel-controlled evaluation}: every claim must beat a
norm-matched \emph{non-topological} control through the identical
channel (\S\ref{sec:conditions}).

The loss proves topology-specific on the class no prior ever won ---
loops, zero phantom handles --- confirming RH2; the channels compose
(RH3); the H0 class is matched by the generic control, sharpening RH1:
topological guidance earns its keep where geometry and topology
decouple.

\section{Related work}
\label{sec:related}

\paragraph{A decade of gradient-based reconstruction: four positions on
connectivity.}
OpenDR~\citep{loper2014opendr}, neural mesh
rendering~\citep{kato2018neural}, SoftRas~\citep{liu2019softras},
DIB-R~\citep{chen2019dibr} and nvdiffrast~\citep{laine2020nvdiffrast}
established gradient flow from pixels to geometry; reconstruction work
since splits on what that geometry should be.
(i)~\emph{Implicit fields} --- learned occupancy and signed
distance~\citep{mescheder2019occupancy,park2019deepsdf}, volume-rendered
from NeRF~\citep{mildenhall2020nerf} through NeuS~\citep{wang2021neus} to
Neuralangelo~\citep{li2023neuralangelo} --- are watertight and
topologically consistent \emph{by construction} (the strongest
counter-position), at field-evaluation and extraction cost.
(ii)~\emph{Fixed-connectivity templates} deformed per
input~\citep{wang2018pixel2mesh,hanocka2020point2mesh} keep manifoldness
but freeze topology at the template's genus.
(iii)~\emph{Differentiable iso-surfacing} re-extracts the mesh from a
volumetric scaffold at every
step~\citep{shen2021dmtet,munkberg2022nvdiffrec,shen2023flexicubes}
(Shape-as-Points bridges point soups to watertight
implicits~\citep{peng2021sap}) --- topological freedom at grid cost.
(iv)~\emph{Connectivity-free primitive soups} --- point
splatting~\citep{yifan2019dss}, patch atlases~\citep{groueix2018atlasnet},
Gaussians~\citep{kerbl2023gaussians} and their surface-accurate
variants~\citep{huang20242dgs,guedon2024sugar}, and a 2025--26 return to
native triangles (splatted~\citep{held2025trianglesplatting},
soft-connected~\citep{burgdorfer2025radiant}, rasterized by
DiffSoup~\citep{tojo2026diffsoup}) --- are the fastest and the easiest
to budget, but even their surface-minded members regularize
\emph{geometry} (alignment, flatness, connectivity forces), never a
measured topology.
MeshSplatting, conversely, renders \emph{already-connected} opaque
meshes differentiably~\citep{held2026meshsplatting}: connectivity as
input; ours a free soup with measured topology.
(i) and (iii) solve topology representationally; (iv) had not addressed
it. Cut by \emph{optimization philosophy} rather than representation,
the decade holds \emph{geometry-oriented} objectives (every method
above), \emph{sampling-oriented} steering of where a fixed objective
spends primitives (below), and an empty third family ---
\emph{topology-oriented} objectives that measure the property and
differentiate through it (suppl.\ Table~S6). Prior work occupies every
pairwise intersection of the section's axes --- explicit triangles,
differentiable optimization, topology in the objective; this paper
takes the three-way one, at the soup's budget and speed.
We build on DiffSoup, whose in-loop
prune/respawn resampler is the very mechanism our allocation study
(suppl.\ \S A) steered; here the resampler stays untouched and only the
objective changes. Vertex-domain preconditioning~\citep{nicolet2021largesteps} is
complementary: it changes \emph{how} vertices move, we change \emph{why}.

\paragraph{The allocation channel: why priors alone fail.}
Three findings of our allocation-channel study set this paper's problem.
(1)~Geometric metrics are topology-blind: on bisection-matched probes with
Chamfer equal \emph{by construction} ($0.916$/$0.628$/$0.977\%$ per
pair), the discriminating-dimension bottleneck separates correct from
broken topology by $35$--$40\times$ --- thin handle ${\approx}0$ vs
$.0302$, bridged merge $.0014$ vs $.0574$, punctured void $.0040$ vs
$.1385$ --- and Hausdorff$_{95}$ even \emph{prefers} the wrong candidate
in two of three cases (suppl.\ Table~S1).
(2)~For voids, a spread topological prior
cuts the bottleneck $33$--$53\%$ below baseline --- but a width-matched
\emph{random} control recovers most of that gain, leaving a small,
shape-dependent residual: the prior's value is carried largely by its
width. (3)~No prior shape repairs loops: the best torus arm is the
\emph{non-topological} wide control ($.0267$ vs the spread prior's
$.0316$), and the concentrated field manufactures phantom handles
($4.4$ significant loops vs the true $2$); C5 below reuses the study's
strongest field (B4; suppl.\ \S A).

\paragraph{Persistence in learning objectives.}
Persistent homology~\citep{edelsbrunner2002persistence} with its stability
guarantees~\citep{cohensteiner2007stability} has been used as a trainable
signal via differentiable sorting of filtration
values~\citep{gabrielsson2020topologylayer}, for topology-preserving image
segmentation~\citep{hu2019topopreserving,clough2022topoloss}, point-cloud
and shape optimization and 3D volume
prediction~\citep{bruel2020toporecon,poulenard2018shapematching,waibel2022topological},
and diagram-metric optimization with convergence
analysis~\citep{carriere2021optimizing}; barcode-valued maps have since been
given a general differential framework~\citep{leygonie2022framework}, and the
per-step cost of persistence-guided descent attacked
directly~\citep{nigmetov2024bigsteps} --- yet never as the objective of
an \emph{explicit triangle-based} reconstruction.
Nearest our setting, concurrent work brings persistence to
reconstruction at other, complementary entry points: an image-space barcode term for
Gaussian splatting~\citep{shen2025topologygs}, homology-guided camera
placement~\citep{gao2026homology} or a genus-matched
template~\citep{gao2025genus} in mesh inverse rendering, and a
connected-component constraint on implicits fitted to point
clouds~\citep{jignasu2024stitch}.
Partition our setting's three requirements --- (i)~a topological quantity
of the evolving \emph{surface} measured differentiably in the objective,
(ii)~image-based training through a renderer, (iii)~a fixed primitive
budget --- and each neighbour satisfies at most one (suppl.\ \S D.7): the
image-space barcode term never measures the surface (i); the point-cloud
pipeline forgoes renderer and budget (ii, iii); the template line
\emph{assumes} the genus we measure (i). A head-to-head benchmark is
therefore ill-posed (inputs, outputs, and cost regimes differ); our
evaluation instead pits the loss against norm-matched controls through
its own channel (\S\ref{sec:setup}).
The nearest soup-side neighbour regularizes coherence with \emph{soft
connectivity forces}~\citep{burgdorfer2025radiant} --- local continuity
pressure, no measured topology; we replace that heuristic with the
measurement. Our backward pass is in the persistence family
--- unlike work that back-propagates through filtration-value sorting or
barcode coordinates in general position, we re-express each paired
simplex's birth/death as a closed-form circumradius --- with three
pipeline-specific ingredients.
(i)~\emph{Alpha-complex pairing over $\alpha{\times}$area surface samples}
of a live triangle soup --- not image grids or raw clouds --- couples
the diagram to the geometry actually being trained.
(ii)~A \emph{Gabriel-exactness gate}: the circumradius re-expression
equals the filtration value only for Gabriel simplices, so we verify it
per pairing (relative $10^{-6}$) --- certifying the frozen gradient
rather than assuming it.
(iii)~\emph{Recruitment for missing features}: optimal partial matching
has zero gradient exactly when a target feature is entirely absent, so
we trade metric fidelity for a restored descent direction
(\S\ref{sec:method}), quantifying both its pathology and its necessity
(\S\ref{sec:results}).

\paragraph{Topology-aware geometry processing.}
Topology-aware reconstruction and repair predate differentiable
pipelines~\citep{sharf2007topologyaware,ju2004repair,attene2010repair}, and
budgeted remeshing has a long
lineage~\citep{hoppe1996progressive,garland1997qem,alliez2003anisotropic,botsch2004remeshing,dunyach2013adaptive};
our contribution is orthogonal --- the budget mechanics stay DiffSoup's,
only the training signal becomes topology-aware.

\section{Method: a matched persistence loss for \mbox{triangle} soups}
\label{sec:method}

\subsection{Objective}
DiffSoup trains with a photometric objective $L_{\mathrm{photo}}$ (opacity
auxiliary + $0.8\,L_1 + 0.2\,\mathrm{SSIM}$)~\citep{tojo2026diffsoup}, which
we adopt \emph{unchanged}: every ablation below isolates the topology
channel, never photometric tuning. We add
\begin{equation}
L \;=\; L_{\mathrm{photo}} \;+\; \lambda(t)\, L_{\mathrm{topo}}(V),
\end{equation}
where $V$ are the soup's vertex positions and $L_{\mathrm{topo}}$ compares the
persistence diagram of the live surface to a fixed \emph{target bundle}
$\mathcal{T}$: the significant finite bars of the ground-truth shape's
diagram, computed once at the density $M$ the loss will see
(alpha values shift with spacing: density matching is a correctness
requirement), diagonal-normalized, and locked for all live evaluations.
Throughout, a bar is \emph{significant} iff its lifetime exceeds the
measurement floor $6\,r_{\mathrm{med}}(M)$ ($r_{\mathrm{med}}$ = median
nearest-neighbour sample spacing; derived and validated in
\S\ref{sec:discussion}).

\subsection{Live diagram and pair-frozen backward}
At each refresh we draw $M$ surface samples ($M{=}2048$ unless the
measurement-floor rule of \S\ref{sec:setup} dictates a denser bundle)
$X_j=\sum_i w_{ij}\,V[F(f_j),i]$ with faces $f_j$ drawn
$\sigma(\alpha)\times$area-weighted and barycentric weights $w_{ij}$ frozen;
$X$ is recomputed from live $V$ every step, so gradients reach vertices while
the combinatorics stay fixed. The alpha complex of
$X$~\citep{edelsbrunner1994alphashapes}, computed exactly with
GUDHI~\citep{maria2014gudhi}, yields persistence
pairs (birth simplex, death simplex) per dimension (suppl.\ Fig.~S9). For a Gabriel simplex the
alpha filtration value is its squared circumradius, so each paired bar's
coordinates are closed-form functions of a few sample positions, all
exactly differentiable:
\begin{equation}
\begin{gathered}
R_{\mathrm{edge}}=\tfrac12\lVert p_1{-}p_0\rVert,\qquad
R_{\mathrm{tri}}=\frac{\lVert u\rVert\,\lVert v\rVert\,\lVert u{-}v\rVert}
                      {2\,\lVert u\times v\rVert},\\[2pt]
R_{\mathrm{tet}}=\lVert y\rVert\quad\text{with}\quad
A\,y=\tfrac12\operatorname{diag}\!\big(AA^{\!\top}\big),
\end{gathered}
\end{equation}
where $p_0,\dots,p_k$ are the $k$-simplex's sample positions,
$u=p_1{-}p_0$, $v=p_2{-}p_0$, and $A$ stacks the rows $p_i{-}p_0$
(the circumcenter is $p_0{+}y$). Edges carry H0 deaths and H1 births,
triangles H1 deaths and H2 births, tetrahedra H2 deaths.
We verify, per pairing, that the re-expressed value matches the filtration
value (relative $10^{-6}$); a pair failing this Gabriel test is
\emph{detached} --- it keeps its (correct) value in the loss but
propagates no gradient, so an exactness failure can never inject a
wrong one.
In practice the gate never fires: 100\% of significant pairs are
Gabriel-exact on all six shapes' clean clouds, and the per-refresh
failure count is zero at every one of the $5{,}500{+}$ logged refreshes
(all $2{,}400$ noisy ones included, \S\ref{sec:results}) --- a constant
pass rate, not an aggregate. Stronger fallbacks (dropping pairs, subgradients, locally
raising $M$) remain available.
The pairing is treated as locally constant --- the standard pair-frozen
device~\citep{gabrielsson2020topologylayer,carriere2021optimizing} ---
refreshed every $K{=}10$ steps and, necessarily, after every resampling
event (which replaces the primitive tensors wholesale); $K{=}10$
is an amortization choice, not a tuned knob --- it bounds refresh cost
($174$\,ms each; the fixed ${\approx}46$\,s of \S\ref{sec:results})
while the pairing stays local.

\subsection{Matching, diagonal pressure, and recruitment}
Let $D$ be the live significant bars and $\mathcal{T}$ the target bars of one
dimension. An optimal partial matching (Hungarian on the standard augmented
cost; equivalently $W_2$) splits bars into matched pairs, unmatched live, and
unmatched target:
\begin{equation}\label{eq:matching}
\begin{split}
L_{\mathrm{topo}} = \!\!\sum_{(i,j)\,\mathrm{matched}}\!\!\!
  \big[(b_i-b^*_j)^2+(d_i-d^*_j)^2\big]\\[-2pt]
 \;+\; \sum_{i\,\mathrm{unmatched}} \Big(\tfrac{d_i-b_i}{2}\Big)^{\!2}
 \;+\; L_{\mathrm{recruit}} .
\end{split}
\end{equation}
The first term pulls matched features to their targets; the second
pushes spurious features to the diagonal; the third exists because
optimal matching has a provable failure mode:
\begin{lemma}\label{lem:zerograd}
If a target bar $t\in\mathcal{T}$ is unmatched under the optimal partial
matching, then $L_{\mathrm{topo}}$ without $L_{\mathrm{recruit}}$ is
locally independent of $t$: no component of its gradient in $V$ acts to
create the missing feature.
\end{lemma}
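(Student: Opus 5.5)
The argument uses the locally frozen structure of the loss from the previous subsection. Between refreshes, the pairing (birth and death simplices of every live bar) is held fixed, and so is the optimal partial matching. The first step is to make this precise. On an open neighbourhood $U$ of the current vertex positions $V$, two things are constant: the set of significant live bars $D$, and the matching $\pi$ that partitions $D\cup\mathcal{T}$ into matched pairs, unmatched live bars, and unmatched target bars. If the optimum is unique, continuity of the augmented Hungarian cost in $(b_i,d_i)$ gives this directly; otherwise we work with the pair-frozen convention the paper adopts, under which $\pi$ is refreshed only every $K$ steps. On $U$ the loss without recruitment can then be written as
\[
L_\pi(V)=\sum_{(i,j)\in\pi}\big[(b_i(V)-b^*_j)^2+(d_i(V)-d^*_j)^2\big]+\sum_{i\notin\pi}\Big(\tfrac{d_i(V)-b_i(V)}{2}\Big)^{2}.
\]
Here each $b_i(V)$ and $d_i(V)$ is the closed-form circumradius of its frozen simplex, evaluated at samples $X=WV$ with the barycentric weights $W$ fixed.

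The second step is the key observation. The coordinates $(b_t,d_t)$ of the unmatched target bar $t$ appear nowhere in $L_\pi$. Matched terms involve only target bars $j$ with $(i,j)\in\pi$, and $t$ is not among them. The diagonal term ranges over live bars only. The standard augmented cost gives an unmatched target bar a diagonal penalty $((d_t-b_t)/2)^2$, but this is a constant in $V$ because $\mathcal{T}$ is locked. Hence $L_\pi$, restricted to $U$, is a function that does not depend on $t$. Replacing $\mathcal{T}$ by $\mathcal{T}\setminus\{t\}$ changes $L_\pi$ by a constant, so $\nabla_V L_\pi$ is identical to the gradient of the problem in which $t$ was never a target. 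This gives the local independence claimed in Lemma~\ref{lem:zerograd}.

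The third step turns this into the statement that no gradient component acts to create the missing feature. By the chain rule,
\[
\nabla_V L_\pi=\sum_i\Big(\tfrac{\partial L_\pi}{\partial b_i}\nabla_V b_i+\tfrac{\partial L_\pi}{\partial d_i}\nabla_V d_i\Big).
\]
Each summand moves an existing live bar, either toward its own matched target or toward the diagonal. The recruitment-free gradient therefore contains no information about $(b_t,d_t)$. Two consequences follow. If the recruitment-free loss is at a stationary point with respect to the other bars, descent stalls even though $t$ is absent. More directly, any direction whose effect is to raise the lifetime of a sub-threshold live bar toward $t$ has zero first-order weight coming from $t$. Bars below the significance floor are not in $D$ at all, so they receive no gradient. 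Bars in $D$ that are unmatched are pushed toward the diagonal, which is the opposite of creating $t$.

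The step I expect to be the main obstacle is the first one: local constancy of the matching. Two things can break it, namely ties in the Hungarian optimum and bars crossing the significance floor. The cleanest route is to state the lemma at points where the optimal matching is unique and no live bar lifetime equals the floor $6\,r_{\mathrm{med}}$. These conditions are open and generic, so $\pi$ and $D$ are constant on a neighbourhood $U$. At degenerate points we invoke the pair-frozen convention, and the same argument applies to any fixed selection of optimal matching, so the conclusion holds for every element of the resulting (Clarke-type) subdifferential.
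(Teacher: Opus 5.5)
Your proof is correct and is essentially the paper's own sketch, spelled out in more detail. As in the paper, you freeze the pairing and the optimal matching (the pair-frozen regime), so the unmatched target enters only through its constant diagonal penalty $((d^*_t-b^*_t)/2)^2$, every $V$-dependent term of the recruitment-free loss references live bars only, and the chain rule shows no gradient term is directed at $t$; your genericity conditions (unique optimum, no lifetime exactly at $6\,r_{\mathrm{med}}$) just make the paper's ``locally constant'' step explicit.

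One overreach in your closing aside: the conclusion holds for any frozen optimal matching that leaves $t$ unmatched, but not for every element of a Clarke-type subdifferential at degenerate points. At a tie where some optimal matching does match $t$, or where a sub-floor bar sits exactly at the floor and could enter $D$ matched to $t$ (the loss jumps there, so it is not even Lipschitz), limiting gradients can depend on $t$. The lemma needs only the frozen-selection version, so drop the Clarke claim.
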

\noindent\emph{Proof sketch.} The unmatched target's augmented cost is its
diagonal penalty $\big((d^*_t-b^*_t)/2\big)^2$, constant in the live
coordinates; every $V$-dependent term of Eq.~\ref{eq:matching} references
live bars only, and the matching is locally constant in $V$ (the
pair-frozen regime) --- no gradient term is directed at $t$.\hfill$\square$

Recruitment restores a descent direction. Let $\mathcal{T}_{\mathrm{miss}}$ be the unmatched target
bars, processed in decreasing persistence order, and let the candidate pool
$\mathcal{A}$ hold every live bar not claimed by the matching --- unmatched
significant bars and, crucially, \emph{sub-threshold} bars, which matching
never sees. Each missing target greedily claims its nearest pool bar,
without replacement, under the same squared birth--death metric as the
matched term:
\begin{equation}
\begin{split}
L_{\mathrm{recruit}} &= \!\!\sum_{j\in\mathcal{T}_{\mathrm{miss}}}\!\!
  \big[(b_{k(j)}-b^*_j)^2+(d_{k(j)}-d^*_j)^2\big],\\
k(j) &= \operatorname*{arg\,min}_{k\,\in\,\mathcal{A}_j}
  \big[(b_k-b^*_j)^2+(d_k-d^*_j)^2\big],
\end{split}
\end{equation}
where $\mathcal{A}_j$ is the pool remaining after earlier recruitments;
recruited bars are excluded from the diagonal term and back-propagate
through the same circumradius machinery.
This deliberately departs from metric-faithful $W_2$;
\S\ref{sec:results} probes its location-blindness pathology directly,
and C6 shows the term is load-bearing.
Greedy rather than optimal assignment is deliberate: in every logged
regime the missing-target set is a single bar (torus: one unreached
target at every refresh, every seed; \S\ref{sec:results}), where the two
coincide; when they diverge, decreasing-persistence order spends the
pool on the most significant features first.

\subsection{Calibration instead of curriculum}
$\lambda(t)$ ramps linearly from 20\% to 50\% of training, with peak set once
by a gradient-norm ratio at the first active step:
$\lambda_{\mathrm{peak}}=\rho\,\lVert\nabla_V L_{\mathrm{photo}}\rVert /
\lVert\nabla_V L_{\mathrm{topo}}\rVert$. The single knob $\rho$ is
dimensionless ($\rho{=}0.1$ throughout); the tail error is flat across
$\rho\in\{0.03,0.1,0.3\}$, a constant-$\lambda$ ablation matches the
ramped version, and alternative ramp windows land within seed noise
(suppl.\ Table~S8) --- calibration, not scheduling, makes the knob safe.

\subsection{Conditions}
\label{sec:conditions}
C0: baseline (photometric only). C1: $+L_{\mathrm{topo}}$. C2: $+$ a
short-range repulsion on the \emph{same} frozen samples, calibrated with the
\emph{same} $\rho$ --- the norm-matched control, the RH2 test (C2g: a
gentler radius). C3: C1 without the ramp.
(C4, curvature-weighted, is deferred; the numbering gap keeps
condition tags comparable across studies.)
C5: C1 $+$ the best prior of our allocation study (the spread topological
field B4; suppl.\ \S A) --- the RH3 test.
C6: C1 without recruitment (pure matching $+$ diagonal) --- does the
term matter in-loop, or only for cold-start repair?

\section{Experimental setup}
\label{sec:setup}

\paragraph{Scenes and budgets.} Synthetic COLMAP scenes with analytically
known topology, rendered from 24--72 Fibonacci-sphere viewpoints at
200--256\,px (every-8th test holdout $\Rightarrow$ 21--63 training
views): sphere and cube
(void class H2, $\beta{=}(1,0,1)$, budget $N{=}1200$), torus (loop class H1,
$(1,2,1)$, $N{=}700$), two spheres (component class H0, $(2,0,2)$, $N{=}700$),
and genus-2 double torus as a stretch case ($(1,4,1)$, $N{=}2000$). Budgets
are the allocation study's headroom points (suppl.\ \S A) --- tight
enough that baseline topology is imperfect.
Eight external, genus-known meshes probe generality beyond the analytic
family, organized as the study's two observable classes
(\S\ref{sec:generality}): a \emph{loop group} --- \emph{bob} (genus 1),
\emph{rocker-arm} (1), the CGAL \emph{eight} (2) --- and a \emph{void
group} --- \emph{spot}, \emph{fandisk}, \emph{armadillo}, \emph{horse}
(all genus 0), the \emph{tom-yum pot} (3, below); sources, licenses and
certificates: suppl.\ \S G. Seven pass the mesh certificate below
as-is; five further
candidates failed it (open boundaries, non-manifold junctions) and were
excluded, never repaired. Membership, observable class and density were fixed by the
pre-registered staircase rule \emph{before} any training (rocker-arm:
torus precedent, H1-restricted at $M{=}2048$; eight: the pot's, full
$(1,4,1)$ bundle at $M{=}8192$). Scenes: the same pipeline (48 views,
200\,px), unchanged class budgets (trio pilot: tails $.043$--$.054$,
in the analytic band).
The fourth, the \emph{tom-yum pot} (Fig.~\ref{fig:matrix}, bottom row;
aluminium showcase: suppl.\ Fig.~S7), probes the step
every real deployment needs: a raw artist-authored mesh
rather than a clean benchmark. As exported from Blender it is textbook
triangle soup (9 overlapping open shells, 232 non-manifold junction
edges) with no trustworthy ground-truth topology --- the same defect
that excluded ShapeNet. An ingest-and-certify pass (exact weld,
offset-solidification into a thin closed shell, then \emph{certification
by measurement}: exact simplicial homology cross-checked against
per-component Euler characteristics, both independent of the loss's
sampled alpha complex; watertightness; bit-identical rebuilds) yields
one body of genus 3, $\beta{=}(1,6,1)$ exact,
rendered and budgeted as a void-class shape ($N{=}1200$). This ingest
pipeline is a test fixture here, not a claimed contribution. Where the
other externals arrive clean enough to certify as-is, the pot's
topology exists only \emph{after} ingest --- a controlled real-world
proxy, not a blind test: what real scans will need (mechanics and
challenge inventory: suppl.\ \S B, Table~S4).
The pot also exercises the floor rule of \S\ref{sec:discussion}
prospectively: every H1 loop of the certified shell sits below the
$6\,r_{\mathrm{med}}$ floor at every working $M$, while the narrow flue
mouth caps at small $\alpha$ --- an H2 void on the pot axis, first
clearing the floor at $M{=}8192$ (margin $1.20$--$1.23\times$, five
sampling seeds). The pre-registered rule builds the bundle there (a
floor-rule density deviation later repeated by eight; the loss samples
at the same density): exactly one significant bar, stable to
re-sampling (${\sim}10^{-5}$).

\paragraph{Protocol.}
All conditions share one implementation: C0 is the unmodified DiffSoup
trainer; the loss enters as one additive term, and disabled it executes
none of the added code (no renderer or resampler edits).
2{,}500 steps; resampling every 100 steps (untouched);
$\rho{=}0.1$, ramp $0.2{:}0.5$, refresh $K{=}10$, $M{=}2048$; decisive shapes
(sphere, torus) and the double torus run 5 seeds (the former with the
full condition set incl.\ C6), replication shapes (cube, two spheres)
3 seeds (C0/C1/C2).
(C7/C7h $=$ C1 $+$ noise; decisive shapes, 3 seeds) perturbs the cloud
the loss \emph{plan} sees with i.i.d.\ Gaussian noise at every refresh:
the plan decides on the noisy observation, the pulls apply to the true
live positions.
The torus restricts the loss to H1 (its own void bar is sub-threshold at
$M{=}2048$ --- a loss must not act on features it cannot reliably
measure); the double torus likewise targets its two measurable loops of
four. The generality wave runs C0/C1/C2, three seeds, each shape at its
staircase-fixed bundle (bob: full loops$+$void; rocker-arm: H1 only;
eight and the pot: $M{=}8192$).

\paragraph{Metrics and verdict.} Primary: bottleneck to the target
diagram in the discriminating dimension, tail-averaged (last three
dumps, steps $2{,}300$--$2{,}500$), seed-averaged (20k eval samples,
target-normalized).
Secondary: significant-feature counts (phantom check) and Chamfer parity.
Reporting policy: mean$\pm$sd over seeds; Welch $\sigma$ only where both
arms ran five seeds; three-seed verdicts rest on disjoint per-seed
ranges (suppl.\ Table~S7).
\emph{Verdict rule}: C1 must beat C0 \textbf{and} every norm-matched
control at Chamfer parity --- a topological win a dumb regularizer can
match is not one. \emph{Chamfer parity}, formally: the passing arm's
seed-mean Chamfer must satisfy
$\mathrm{Ch}(\mathrm{C1})\le 1.15\,\mathrm{Ch}(\mathrm{C0})$; the $15\%$
headroom absorbs seed variance and nondeterminism
(\S\ref{sec:limitations}) and never did any work --- every
passing arm's Chamfer was at or below baseline.

\section{Results}
\label{sec:results}

A gate first: the loss alone (no photometric term, Adam on points)
repairs all four controlled probes --- a punctured void closes
($8.8\times$), a spurious handle dies at the diagonal, mis-spaced
components separate ($1229\times$), a bridged merge with \emph{no}
significant live H0 bar (Lemma~\ref{lem:zerograd}'s case) is severed
($450\times$, $\beta_0\,1{\to}2$); the predicted location-blindness did
not materialize (suppl.\ \S C, Fig.~S5). Raw-cloud sanity toys;
multipliers not comparable to training.

\subsection{Controlled comparison: analytic shapes}
\label{sec:cmatrix}
Table~\ref{tab:main} (the \emph{C-matrix}: every condition against every
analytic shape) reports tail bottleneck (mean$\pm$sd over seeds) in each
shape's discriminating dimension; Chamfer in the text where it matters;
the full training trajectories for the two decisive shapes are in
suppl.\ Fig.~S6 (line = seed mean, band = seed min--max).

\begin{table*}[t]
\centering\small
\caption{Tail bottleneck to target (mean$\pm$sd; lower is better);
$\times$ = vs C0; verdict: C1 $<$ C0 and $<$ every control at Chamfer
parity. Seeds: 5 (sphere, torus, double torus --- saturated
identically), 3 (cube, two spheres). C6 = no recruitment.}
\label{tab:main}
\footnotesize
\setlength{\tabcolsep}{4pt}
\begin{tabular}{lllllll}
\toprule
shape (dim) & C0 & C1 loss & C2 control & C3 no-ramp & C5 loss+prior & C6 no-recruit \\
\midrule
sphere (H2)      & .0623$\pm$.0063 & \textbf{.0156$\pm$.0013} (4.0$\times$) & .1372 (worse) & .0151$\pm$.0004 & \textbf{.0082$\pm$.0007} (7.6$\times$) & .0166$\pm$.0005 \\
cube (H2)        & .0582$\pm$.0046 & \textbf{.0074$\pm$.0011} (7.9$\times$) & .1346 ($\beta_2\,1{\to}0$) & --- & --- & --- \\
torus (H1)       & .0424$\pm$.0031 & \textbf{.0180$\pm$.0016} (2.3$\times$) & .0457 ($\beta_1\,2{\to}1$) & .0155$\pm$.0015 & \textbf{.0133$\pm$.0004} (3.2$\times$) & .0411$\pm$.0038 \\
two spheres (H0)& .0065$\pm$.0008 & .0007$\pm$.0002 (9.5$\times$)          & .0008 (ties C1) & --- & --- & --- \\
double torus (H1)& .0264$\pm$.0000 & .0264$\pm$.0000 (saturated)           & .0264$\pm$.0000 & --- & --- & --- \\
\bottomrule
\end{tabular}
\end{table*}

\paragraph{Voids (H2): topology-specific, large.} Sphere $4.0\times$
($16.2\sigma$ Welch vs C0, five seeds), cube $7.9\times$ (three seeds:
ranges disjoint, C1 $.0064$--$.0086$ vs C0 $.0551$--$.0636$), both at
equal-or-better Chamfer (sphere $0.46$ vs $0.50$; cube $0.92$ vs $1.05$).
The
norm-matched control is \emph{destructive}, not merely weaker: it pins
the void's death at its repulsion equilibrium ($2.2\times$ worse,
seed-identical) and erases the cube's void outright; no tested strength
reproduces any gain --- C1 $<$ C0 $<$ C2g $<$ C2
(\S\ref{sec:limitations}; spread: suppl.\ Fig.~S3; mechanism:
\S\ref{sec:discussion}).

\paragraph{Loops (H1): the channel's distinguishing result.} No prior
shape won this class (suppl.\ \S A). The loss cuts torus
H1 error $2.3\times$ with \emph{zero phantom handles in all five seeds}
(count trajectories flat at the true $\beta_1{=}2$,
suppl.\ Fig.~S4), while the control collapses one loop by step
${\sim}800$ and never recovers it. Metric pressure
on birth/death values suits 1-D features in a way budget allocation does
not (mechanism: \S\ref{sec:discussion}).

\paragraph{Components (H0): not topological --- third channel, same answer.}
C1 reduces the (already small) baseline error $9.5\times$, but the
generic control matches it (.0008 vs .0007); the increment appears only
in Chamfer ($0.54$ vs $1.94$). Three channels agree: $\beta_0$ is
carried by shell count and separation.

\paragraph{Composition (C5).} Loss $+$ spread prior is best everywhere
tested --- sphere $.0082$ ($7.6\times$ vs C0, ${\approx}2\times$ vs C1 alone,
${\approx}4.4\times$ below the prior alone%
), torus $.0133$ --- with the best sphere Chamfer (torus $.502$,
trailing only C3's $.495$). Allocation and value-tuning are
complementary, not substitutes.

\paragraph{Stretch case (genus 2): saturated, honestly.} Every double-torus
arm returns exactly $.0264$: the two tube loops are unexpressable at
feasible budgets --- the \emph{representation's} ceiling, not the
channel's (suppl.\ \S B; floors: \S\ref{sec:discussion}).

\paragraph{Ablations.} The $\rho$-response is flat over a $10\times$
range (suppl.\ Table~S8); C3 (no curriculum)
$\approx$ C1 on both decisive shapes \emph{and} off the analytic family
(bob: C3 $.0171{\pm}.0026$ vs C1 $.0208{\pm}.0008$).
Overhead, honestly: the loss adds a \emph{fixed}
${\approx}46$\,s per 2{,}500-step run (refresh $174$\,ms at $M{=}2048$
every 10 steps) --- more than the $33$--$41$\,s photometric baseline on
these deliberately tiny scenes, but the cost scales $O(M\log M)$ per
refresh, not with scene or image complexity, so its share of a run whose
photometric cost is $T$ is $46/(T{+}46)$ --- an arithmetic projection, ${\approx}7\%$ of a ten-minute run ---
independent of image count and resolution; no parameters are added.

\paragraph{Recruitment is load-bearing for loops (C6).} Removing the
recruitment term (pure matching $+$ diagonal) collapses the torus win
entirely: $.0411\pm.0038$, statistically baseline ($0.6\sigma$ from C0,
$12.4\sigma$ worse than C1) at baseline Chamfer, though both loops
still \emph{exist} ($\#$sig H1 $= 2$ every seed; suppl.\ Fig.~S6). The
logs locate the mechanism: one torus loop is significant at every
refresh, the other chronically \emph{below} threshold (all five seeds,
all 200 refreshes) --- recruitment is the second loop's sole gradient
path; on this class it \emph{is} the mechanism, not a cold-start
device. On the sphere it never fires (0 of 200, all seeds): C6 is
loss-identical to C1, and the $1.5\sigma$ tail gap ($.0166$ vs $.0156$)
measures pure run-to-run noise (\S\ref{sec:limitations}).

\paragraph{Sensor-noise stress (C7/C7h).} With the plan's cloud
perturbed at every refresh ($\sigma = 0.5\%$/$1\%$ of the diagonal
${\approx}0.5$/$1.0$ spacings, \S\ref{sec:setup}), the loss degrades
gradually and never harms: torus $.0210{\pm}.0018$ / $.0259{\pm}.0022$
($2.0\times$/$1.6\times$ below baseline, vs C1's $2.3\times$), sphere
$.0235{\pm}.0009$ / $.0328{\pm}.0013$ ($2.7\times$/$1.9\times$, vs
$4.0\times$), at equal-or-better Chamfer, correct counts in every run
(suppl.\ Fig.~S6), zero Gabriel failures in all $2{,}400$ noisy
refreshes (internals: suppl.\ \S B).

\subsection{Generality: external genus-known meshes}
\label{sec:generality}
Everything above is on one analytic family, so we repeat the verdict
protocol --- C0/C1/C2, three seeds, class budgets, $\rho$ and ramp
unchanged, density fixed by the floor rule --- on the eight external
genus-known meshes of \S\ref{sec:setup}'s two groups, with no
per-shape tuning.
Seven of eight \textbf{pass} the verdict rule at equal-or-\emph{better}
Chamfer (Table~\ref{tab:gen}; original-wave trajectories in suppl.\
Fig.~S2); the eighth, rocker-arm, is a \emph{no-headroom null}, not a
counterexample (obs.~5). The control replicates its failure modes: all
five voids erased ($\beta_2\,1{\to}0$, every seed; the pot at
$1.36\times$ worse Chamfer), a bob loop collapsed ($\beta_1\,2{\to}1$),
eight's missing pair left missing.

\begin{table}[t]
\centering
\caption{Generality wave, two observable-class groups (mean$\pm$sd, 3
seeds; $\times$ = vs C0; \emph{group mean} $\pm$ sd over member shapes,
pre-registered). Seven of eight \textbf{pass} Table~\ref{tab:main}'s
verdict rule; rocker-arm is a no-headroom null (counts intact in every
arm). *seed-exact unmatched-target pins (\S\ref{sec:discussion}).
Per-seed values: Table~S7; provenance, certificates, staircases:
suppl.\ \S G.}
\label{tab:gen}
\scriptsize
\setlength{\tabcolsep}{1.5pt}
\resizebox{\linewidth}{!}{%
\begin{tabular}{llll}
\toprule
shape & C0 & C1 loss & C2 control \\
\midrule
bob          & .0426$\pm$.0012 & \textbf{.0208$\pm$.0008} (2.1$\times$) & .0437 ($\beta_1\,2{\to}1$) \\
rocker-arm   & .0083$\pm$.0004 & .0087$\pm$.0015 (0.96$\times$) & .0113 (counts intact) \\
eight (genus 2) & .0249$\pm$.0000* & \textbf{.0162$\pm$.0017} (1.5$\times$) & .0249 (2 of 4 loops) \\
\emph{loop group mean (H1, 3)} & & \textbf{1.52$\times$}$\,\pm\,$0.55 & \\
\midrule
spot         & .0521$\pm$.0081 & \textbf{.0237$\pm$.0000} (2.2$\times$) & .0652 ($\beta_2\,1{\to}0$) \\
fandisk      & .0538$\pm$.0022 & \textbf{.0052$\pm$.0008} (10.4$\times$) & .0571 ($\beta_2\,1{\to}0$) \\
tom-yum pot  & .0221$\pm$.0003 & \textbf{.0057$\pm$.0002} (3.9$\times$) & .0223 ($\beta_2\,1{\to}0$) \\
armadillo    & .0511$\pm$.0000* & \textbf{.0124$\pm$.0001} (4.1$\times$) & .0511 ($\beta_2\,1{\to}0$) \\
horse        & .0408$\pm$.0000* & \textbf{.0109$\pm$.0022} (3.8$\times$) & .0408 ($\beta_2\,1{\to}0$) \\
\emph{void group mean (H2, 5)} & & \textbf{4.87$\times$}$\,\pm\,$3.17 & \\
\bottomrule
\end{tabular}}
\end{table}

Observations.
(1)~\emph{The H1 claim survives the family change}: bob halves loop
error, zero phantom handles every seed (suppl.\ Fig.~S8), replicating
torus.
(2)~\emph{The genus-2 ceiling was representational, not topological}:
on the fat CGAL eight, baseline and control read only two of the four
certified loops (value pinned at the unmatched bound, seed-exact); the
loss \emph{restores the missing pair} ($\#$sig $4/3/4$) and cuts value
error $1.5\times$ at $0.89\times$ Chamfer --- the double torus's
saturation (\S\ref{sec:cmatrix}) was the representation's ceiling, not
genus-2's.
(3)~\emph{Sharp-featured CAD benefits most}: fandisk's $10.4\times$
exceeds every analytic-family effect.
(4)~\emph{Baselines can pin like controls}: armadillo and horse read
the correct void count but pin its value at the same bound
($.0511$/$.0408$, seed-exact); the loss unpins both
($4.1\times$/$3.8\times$) at better Chamfer.
(5)~\emph{No headroom, no effect --- and no harm}: rocker-arm's one
measurable loop is already baseline-correct ($.0083$ vs the
$.04$--$.05$ band): \S\ref{sec:setup}'s imperfection criterion is unmet
and the loss stays inert ($0.96\times$, parity $1.00$, counts intact
everywhere) --- the method's honest boundary.
(6)~\emph{Count-level limits persist on sub-budget organic detail}:
spot's \emph{baseline} makes a phantom void in two of three seeds
($\#$sig $=2,1,2$); the loss clears two, retains one ($1,2,1$) ---
value repair reliable, count repair not guaranteed; its C1 tail pins at
$.0237$ (sd $0$), a milder floor of the double-torus kind.
\emph{And the pipeline reaches raw artist content}: the tom-yum pot
(Fig.~\ref{fig:matrix}; suppl.\ Fig.~S7), a raw Blender export
certified rather than constructed and bundled at $M{=}8192$ by the same
pre-registered rule (\S\ref{sec:setup}), cuts the void's value error
$3.9\times$ at Chamfer parity ($0.99\times$), the study's tightest seed
spread (C1 $.0056$--$.0059$ vs C0 $.0218$--$.0223$); the control
destroys the void ($\beta_2\,1{\to}0$, all seeds), pinned at the
unmatched bound ($.0223$) --- count, not value, is the verdict. Two caveats transfer: its baseline already reads the
correct count (a value-accuracy win, like bob), and ground truth is
certified, not constructed.
Group statistic (pre-registered): loops $1.52\times\pm0.55$ (2/3 pass),
voids $4.87\times\pm3.17$ (5/5); the passing wave spans
$1.5$--$10.4\times$ (suppl.\ \S G).

\section{Mechanism and discussion}
\label{sec:discussion}

\paragraph{A local signal cannot carry a global property --- sparse
tugs can.} A pixel's photometric gradient reaches a few vertices;
topology is global and categorical --- a wrong genus is not approached
smoothly; the blindness probes quantify the gap (\S\ref{sec:related})
--- no geometric proxy stands in. The persistence
term supplies the measurement with a usable gradient: each significant
pair back-propagates through the $3$--$7$ sample positions of its birth
and death simplices --- \emph{precise but sparse} pulls, iterated by re-pairing; the
resampling prior is the opposite --- broad, indirect, budget-limited. The results sort along this axis: loops respond to value
pressure (H1 win), wide shells to both channels (composition best),
components to neither (H0 null) --- and nothing responds where nothing
is broken (rocker-arm's no-headroom null, \S\ref{sec:generality}).

\paragraph{Why the control damages topology.} The norm-matched repulsion
imposes a length-scale floor on the sampled surface: the void's death
pins at that equilibrium (seed-identical) and drags while
near-diagonal phantom bars accumulate (suppl.\ Fig.~S1) and thin
structures collapse (cube $\beta_2$, torus $\beta_1$); the loss has no
length scale --- it pulls the death coordinate itself. The
same gradient budget through the same channel \emph{without}
topological information is actively harmful: C1's wins cannot be
extra-regularization artifacts.

\paragraph{Measurability at the loss's density.} A persistence loss can
only enforce features that clear the significance floor at its density
$M$: the torus void and two of the double torus's four loops do not, at
$M{=}2048$. We treat this as a principle, not a nuisance: the bundle is
built and audited at the same $M$, sub-floor features are excluded,
and the double torus is evaluated against its two measurable loops ---
all the soup can express at feasible budgets.

The floor is quantifiable: significance means lifetime
$> 6\,r_{\mathrm{med}}(M)$, and $r_{\mathrm{med}}$ follows the
uniform-sampling law $\propto M^{-1/2}$ (measured $.0090{\to}.0028$
over $M{=}2048{\to}20000$; ratio $3.16$ vs $\sqrt{9.77}{=}3.13$;
suppl.\ Fig.~S10). The
double torus's tube loops live at lifetime $.045$--$.046$ on the clean
surface: at $M{=}2048$ that is $0.83$--$0.85\times$ the floor --- invisible
by a $15\%$ margin, clearing it only from $M\!\approx\!3{\times}10^{3}$ ---
while at the evaluation density ($M{=}20000$, floor $.0171$) they clear
it $3.1\times$; the binding constraint is the \emph{representation}
(the soup never expresses the tubes): the \emph{measurement} floor
recedes as $M^{-1/2}$, the \emph{representation} floor is set by $N$
and no $M$ lowers it --- the double torus sits on the second. The torus quantifies the noise interaction: its second loop clears the
clean floor by only $1.33\times$ --- a margin live-cloud noise consumes,
leaving recruitment as its only gradient path (C6), the loss reaching
below its own floor (suppl.\ \S C).
Suppl.\ Table~S3 tabulates counts across $M$; the pot and the group
wave ran the rule \emph{prospectively} (\S\ref{sec:setup}; suppl.\
\S G), validated in-loop (Table~\ref{tab:gen}); adaptive density is the
natural extension.

\paragraph{Prescription.} Correct topology in the \emph{loss}; allocate
\emph{wide}; combine --- no schedule needed once the loss is calibrated
against the photometric gradient.

\section{Limitations}
\label{sec:limitations}

\textbf{Synthetic, single-machine.} All quantitative claims are on
synthetic renders with known topology: an
analytic family plus eight external genus-known meshes in two groups
(\S\ref{sec:generality}; one a certified raw artist mesh, the intended
scan path), drawn from public repositories, not a curated benchmark
suite. All surfaces
are closed except a first \emph{open-surface} probe, pre-registered
and passed: two bowls cut from the sphere (openings $0.14R$/$0.50R$,
certified disk topology) still read one significant void (the designed
rim's H1 never clears the floor); the loss cuts its error
$4.2\times$/$4.1\times$ at better-than-baseline Chamfer (ranges
disjoint); the control erases the wide bowl's void in 2/3 seeds
(suppl.\ \S F).
C7's Gaussian probe bounds noise
response, not scan realism (outliers, missing regions, registration
error). Real scans remain future work: their boundaries add
\emph{noise-born} H1 bars (unlike a designed rim); bar-filtering is the
near-term step.

\textbf{Density floor.} The loss cannot resolve features below the
significance floor at its density (torus void; two of four double-torus
loops at $M{=}2048$) --- audited, but a real ceiling. Count-level repair
of sub-budget detail is likewise out of scope (spot,
\S\ref{sec:generality}).

\textbf{Fixed target diagram.} The target bundle presumes known ground-truth
topology (a prior scan or template --- not blind capture); template-free
designs (total-persistence minimization, a target-\emph{genus} penalty,
class-level diagram priors) are untested.

\textbf{Control strength.} The repulsion control is aggressive (it
destroys topology at calibrated magnitude); a half-radius variant is
still $1.6\times$ worse (sphere $.0972$ vs $.0623$): verdicts do not
hinge on control tuning.

\textbf{Nondeterminism.} CUDA training is not bit-reproducible
(borderline prune/split decisions flip under float drift at
resampling); tail CVs run $7$--$10\%$ for baseline and loss arms alike,
a loss-identical re-run pair (sphere C6 $\equiv$ C1) landed $6\%$ apart
--- pure noise; statements follow \S\ref{sec:setup}'s reporting policy.

\textbf{Recruitment guarantees.} The zero-gradient pathology it repairs
is provable (Lemma~\ref{lem:zerograd}); its own self-correction premise
is empirical (suppl.\ \S C) --- pathological geometries could defeat
it; a failure-case study is future work.

\section{Conclusion}
\label{sec:conclusion}

We moved topology from the allocation channel into the objective: a
matched persistence loss with a Gabriel-\emph{certified} backward and a
load-bearing recruitment term, every gain beating a norm-matched
control at Chamfer parity. The hypotheses hold (RH1 objective root
cause; RH2 topology-specific; RH3 channels compose); verdicts
replicate across two external groups and degrade gracefully under
noise. Prescription: topology in the \emph{loss}, allocate \emph{wide},
combine; next: real scans.
\textbf{Reproducibility}: code, scenes, bundles, logs and seeds ship
on acceptance (nondeterminism: \S\ref{sec:limitations}).

{
  \small
  \bibliographystyle{ieeenat_fullname}
  \bibliography{refs}

@inproceedings{loper2014opendr,
  author    = {Loper, Matthew M. and Black, Michael J.},
  title     = {{OpenDR}: An Approximate Differentiable Renderer},
  booktitle = {European Conference on Computer Vision (ECCV)},
  pages     = {154--169},
  year      = {2014}
}

@inproceedings{kato2018neural,
  author    = {Kato, Hiroharu and Ushiku, Yoshitaka and Harada, Tatsuya},
  title     = {Neural {3D} Mesh Renderer},
  booktitle = {IEEE/CVF Conference on Computer Vision and Pattern Recognition (CVPR)},
  pages     = {3907--3916},
  year      = {2018}
}

@inproceedings{liu2019softras,
  author    = {Liu, Shichen and Li, Tianye and Chen, Weikai and Li, Hao},
  title     = {Soft Rasterizer: A Differentiable Renderer for Image-Based {3D} Reasoning},
  booktitle = {IEEE/CVF International Conference on Computer Vision (ICCV)},
  pages     = {7707--7716},
  year      = {2019}
}

@article{laine2020nvdiffrast,
  author  = {Laine, Samuli and Hellsten, Janne and Karras, Tero and Seol, Yeongho and Lehtinen, Jaakko and Aila, Timo},
  title   = {Modular Primitives for High-Performance Differentiable Rendering},
  journal = {ACM Transactions on Graphics},
  volume  = {39},
  number  = {6},
  pages   = {194:1--194:14},
  year    = {2020}
}

@inproceedings{tojo2026diffsoup,
  author    = {Tojo, Kenji and Bickel, Bernd and Umetani, Nobuyuki},
  title     = {{DiffSoup}: Direct Differentiable Rasterization of Triangle Soup for Extreme Radiance Field Simplification},
  booktitle = {IEEE/CVF Conference on Computer Vision and Pattern Recognition (CVPR)},
  year      = {2026},
  note      = {arXiv:2603.27151}
}

@article{kerbl2023gaussians,
  author  = {Kerbl, Bernhard and Kopanas, Georgios and Leimk{\"u}hler, Thomas and Drettakis, George},
  title   = {{3D} {Gaussian} Splatting for Real-Time Radiance Field Rendering},
  journal = {ACM Transactions on Graphics},
  volume  = {42},
  number  = {4},
  pages   = {139:1--139:14},
  year    = {2023}
}

@inproceedings{held2025trianglesplatting,
  author    = {Held, Jan and Vandeghen, Renaud and Deli{\`e}ge, Adrien and Hamdi, Abdullah and Giancola, Silvio and Rebain, Daniel and Cioppa, Anthony and Ghanem, Bernard and Vedaldi, Andrea and Tagliasacchi, Andrea and Van Droogenbroeck, Marc},
  title     = {Triangle Splatting for Real-Time Radiance Field Rendering},
  booktitle = {Proceedings of the International Conference on 3D Vision (3DV)},
  pages     = {1248--1257},
  year      = {2026},
  note      = {arXiv:2505.19175}
}

@inproceedings{held2026meshsplatting,
  author    = {Held, Jan and Son, Sanghyun and Vandeghen, Renaud and Rebain, Daniel and Gadelha, Matheus and Zhou, Yi and Cioppa, Anthony and Lin, Ming C. and Van Droogenbroeck, Marc and Tagliasacchi, Andrea},
  title     = {{MeshSplatting}: Differentiable Rendering with Opaque Meshes},
  booktitle = {IEEE/CVF Conference on Computer Vision and Pattern Recognition (CVPR)},
  year      = {2026},
  note      = {arXiv:2512.06818}
}

@article{nicolet2021largesteps,
  author  = {Nicolet, Baptiste and Jacobson, Alec and Jakob, Wenzel},
  title   = {Large Steps in Inverse Rendering of Geometry},
  journal = {ACM Transactions on Graphics},
  volume  = {40},
  number  = {6},
  year    = {2021}
}

@article{sharf2007topologyaware,
  author  = {Sharf, Andrei and Lewiner, Thomas and Shklarski, Gil and Toledo, Sivan and Cohen-Or, Daniel},
  title   = {Interactive Topology-Aware Surface Reconstruction},
  journal = {ACM Transactions on Graphics},
  volume  = {26},
  number  = {3},
  year    = {2007}
}

@article{ju2004repair,
  author  = {Ju, Tao},
  title   = {Robust Repair of Polygonal Models},
  journal = {ACM Transactions on Graphics},
  volume  = {23},
  number  = {3},
  pages   = {888--895},
  year    = {2004}
}

@article{attene2010repair,
  author  = {Attene, Marco},
  title   = {A Lightweight Approach to Repairing Digitized Polygon Meshes},
  journal = {The Visual Computer},
  volume  = {26},
  number  = {11},
  pages   = {1393--1406},
  year    = {2010}
}

@article{bruel2020toporecon,
  author  = {Br{\"u}el-Gabrielsson, Rickard and Ganapathi-Subramanian, Vignesh and Skraba, Primoz and Guibas, Leonidas J.},
  title   = {Topology-Aware Surface Reconstruction for Point Clouds},
  journal = {Computer Graphics Forum},
  volume  = {39},
  number  = {5},
  pages   = {197--207},
  year    = {2020}
}

@inproceedings{gabrielsson2020topologylayer,
  author    = {Br{\"u}el Gabrielsson, Rickard and Nelson, Bradley J. and Dwaraknath, Anjan and Skraba, Primoz},
  title     = {A Topology Layer for Machine Learning},
  booktitle = {International Conference on Artificial Intelligence and Statistics (AISTATS)},
  series    = {Proceedings of Machine Learning Research},
  volume    = {108},
  pages     = {1553--1563},
  year      = {2020}
}

@inproceedings{hu2019topopreserving,
  author    = {Hu, Xiaoling and Li, Fuxin and Samaras, Dimitris and Chen, Chao},
  title     = {Topology-Preserving Deep Image Segmentation},
  booktitle = {Advances in Neural Information Processing Systems 32 (NeurIPS)},
  year      = {2019}
}

@article{clough2022topoloss,
  author  = {Clough, James R. and Byrne, Nicholas and Oksuz, Ilkay and Zimmer, Veronika A. and Schnabel, Julia A. and King, Andrew P.},
  title   = {A Topological Loss Function for Deep-Learning Based Image Segmentation Using Persistent Homology},
  journal = {IEEE Transactions on Pattern Analysis and Machine Intelligence},
  volume  = {44},
  number  = {12},
  pages   = {8766--8778},
  year    = {2022}
}

@inproceedings{carriere2021optimizing,
  author    = {Carri{\`e}re, Mathieu and Chazal, Fr{\'e}d{\'e}ric and Glisse, Marc and Ike, Yuichi and Kannan, Hariprasad and Umeda, Yuhei},
  title     = {Optimizing Persistent Homology Based Functions},
  booktitle = {International Conference on Machine Learning (ICML)},
  series    = {Proceedings of Machine Learning Research},
  volume    = {139},
  year      = {2021}
}

@article{poulenard2018shapematching,
  author  = {Poulenard, Adrien and Skraba, Primoz and Ovsjanikov, Maks},
  title   = {Topological Function Optimization for Continuous Shape Matching},
  journal = {Computer Graphics Forum},
  volume  = {37},
  number  = {5},
  pages   = {13--25},
  year    = {2018}
}

@inproceedings{hoppe1996progressive,
  author    = {Hoppe, Hugues},
  title     = {Progressive Meshes},
  booktitle = {Proceedings of SIGGRAPH 96},
  pages     = {99--108},
  year      = {1996}
}

@inproceedings{garland1997qem,
  author    = {Garland, Michael and Heckbert, Paul S.},
  title     = {Surface Simplification Using Quadric Error Metrics},
  booktitle = {Proceedings of SIGGRAPH 97},
  pages     = {209--216},
  year      = {1997}
}

@article{alliez2003anisotropic,
  author  = {Alliez, Pierre and Cohen-Steiner, David and Devillers, Olivier and L{\'e}vy, Bruno and Desbrun, Mathieu},
  title   = {Anisotropic Polygonal Remeshing},
  journal = {ACM Transactions on Graphics},
  volume  = {22},
  number  = {3},
  pages   = {485--493},
  year    = {2003}
}

@inproceedings{botsch2004remeshing,
  author    = {Botsch, Mario and Kobbelt, Leif},
  title     = {A Remeshing Approach to Multiresolution Modeling},
  booktitle = {Eurographics/ACM SIGGRAPH Symposium on Geometry Processing (SGP)},
  year      = {2004}
}

@inproceedings{dunyach2013adaptive,
  author    = {Dunyach, Marion and Vanderhaeghe, David and Barthe, Lo{\"i}c and Botsch, Mario},
  title     = {Adaptive Remeshing for Real-Time Mesh Deformation},
  booktitle = {Eurographics 2013 --- Short Papers},
  year      = {2013}
}

@article{edelsbrunner2002persistence,
  author  = {Edelsbrunner, Herbert and Letscher, David and Zomorodian, Afra},
  title   = {Topological Persistence and Simplification},
  journal = {Discrete \& Computational Geometry},
  volume  = {28},
  number  = {4},
  pages   = {511--533},
  year    = {2002}
}

@article{cohensteiner2007stability,
  author  = {Cohen-Steiner, David and Edelsbrunner, Herbert and Harer, John},
  title   = {Stability of Persistence Diagrams},
  journal = {Discrete \& Computational Geometry},
  volume  = {37},
  number  = {1},
  pages   = {103--120},
  year    = {2007}
}

@article{edelsbrunner1994alphashapes,
  author  = {Edelsbrunner, Herbert and M{\"u}cke, Ernst P.},
  title   = {Three-Dimensional Alpha Shapes},
  journal = {ACM Transactions on Graphics},
  volume  = {13},
  number  = {1},
  pages   = {43--72},
  year    = {1994}
}

@inproceedings{maria2014gudhi,
  author    = {Maria, Cl{\'e}ment and Boissonnat, Jean-Daniel and Glisse, Marc and Yvinec, Mariette},
  title     = {The {Gudhi} Library: Simplicial Complexes and Persistent Homology},
  booktitle = {International Congress on Mathematical Software (ICMS)},
  series    = {Lecture Notes in Computer Science},
  volume    = {8592},
  pages     = {167--174},
  year      = {2014}
}

@article{leygonie2022framework,
  author  = {Leygonie, Jacob and Oudot, Steve and Tillmann, Ulrike},
  title   = {A Framework for Differential Calculus on Persistence Barcodes},
  journal = {Foundations of Computational Mathematics},
  volume  = {22},
  pages   = {1069--1131},
  year    = {2022}
}

@article{nigmetov2024bigsteps,
  author  = {Nigmetov, Arnur and Morozov, Dmitriy},
  title   = {Topological Optimization with Big Steps},
  journal = {Discrete \& Computational Geometry},
  year    = {2024},
  doi     = {10.1007/s00454-023-00613-x}
}

@inproceedings{mildenhall2020nerf,
  author    = {Mildenhall, Ben and Srinivasan, Pratul P. and Tancik, Matthew and Barron, Jonathan T. and Ramamoorthi, Ravi and Ng, Ren},
  title     = {{NeRF}: Representing Scenes as Neural Radiance Fields for View Synthesis},
  booktitle = {European Conference on Computer Vision (ECCV)},
  year      = {2020}
}

@inproceedings{wang2021neus,
  author    = {Wang, Peng and Liu, Lingjie and Liu, Yuan and Theobalt, Christian and Komura, Taku and Wang, Wenping},
  title     = {{NeuS}: Learning Neural Implicit Surfaces by Volume Rendering for Multi-view Reconstruction},
  booktitle = {Advances in Neural Information Processing Systems (NeurIPS)},
  year      = {2021}
}

@inproceedings{park2019deepsdf,
  author    = {Park, Jeong Joon and Florence, Peter and Straub, Julian and Newcombe, Richard and Lovegrove, Steven},
  title     = {{DeepSDF}: Learning Continuous Signed Distance Functions for Shape Representation},
  booktitle = {IEEE/CVF Conference on Computer Vision and Pattern Recognition (CVPR)},
  pages     = {165--174},
  year      = {2019}
}

@inproceedings{mescheder2019occupancy,
  author    = {Mescheder, Lars and Oechsle, Michael and Niemeyer, Michael and Nowozin, Sebastian and Geiger, Andreas},
  title     = {Occupancy Networks: Learning {3D} Reconstruction in Function Space},
  booktitle = {IEEE/CVF Conference on Computer Vision and Pattern Recognition (CVPR)},
  year      = {2019}
}

@inproceedings{li2023neuralangelo,
  author    = {Li, Zhaoshuo and M{\"u}ller, Thomas and Evans, Alex and Taylor, Russell H. and Unberath, Mathias and Liu, Ming-Yu and Lin, Chen-Hsuan},
  title     = {Neuralangelo: High-Fidelity Neural Surface Reconstruction},
  booktitle = {IEEE/CVF Conference on Computer Vision and Pattern Recognition (CVPR)},
  year      = {2023}
}

@inproceedings{wang2018pixel2mesh,
  author    = {Wang, Nanyang and Zhang, Yinda and Li, Zhuwen and Fu, Yanwei and Liu, Wei and Jiang, Yu-Gang},
  title     = {{Pixel2Mesh}: Generating {3D} Mesh Models from Single {RGB} Images},
  booktitle = {European Conference on Computer Vision (ECCV)},
  year      = {2018}
}

@article{hanocka2020point2mesh,
  author  = {Hanocka, Rana and Metzer, Gal and Giryes, Raja and Cohen-Or, Daniel},
  title   = {{Point2Mesh}: A Self-Prior for Deformable Meshes},
  journal = {ACM Transactions on Graphics},
  volume  = {39},
  number  = {4},
  year    = {2020}
}

@inproceedings{shen2021dmtet,
  author    = {Shen, Tianchang and Gao, Jun and Yin, Kangxue and Liu, Ming-Yu and Fidler, Sanja},
  title     = {Deep Marching Tetrahedra: a Hybrid Representation for High-Resolution {3D} Shape Synthesis},
  booktitle = {Advances in Neural Information Processing Systems (NeurIPS)},
  year      = {2021}
}

@inproceedings{munkberg2022nvdiffrec,
  author    = {Munkberg, Jacob and Hasselgren, Jon and Shen, Tianchang and Gao, Jun and Chen, Wenzheng and Evans, Alex and M{\"u}ller, Thomas and Fidler, Sanja},
  title     = {Extracting Triangular {3D} Models, Materials, and Lighting From Images},
  booktitle = {IEEE/CVF Conference on Computer Vision and Pattern Recognition (CVPR)},
  pages     = {8280--8290},
  year      = {2022}
}

@article{shen2023flexicubes,
  author  = {Shen, Tianchang and Munkberg, Jacob and Hasselgren, Jon and Yin, Kangxue and Wang, Zian and Chen, Wenzheng and Gojcic, Zan and Fidler, Sanja and Sharp, Nicholas and Gao, Jun},
  title   = {Flexible Isosurface Extraction for Gradient-Based Mesh Optimization},
  journal = {ACM Transactions on Graphics},
  volume  = {42},
  number  = {4},
  year    = {2023}
}

@inproceedings{peng2021sap,
  author    = {Peng, Songyou and Jiang, Chiyu and Liao, Yiyi and Niemeyer, Michael and Pollefeys, Marc and Geiger, Andreas},
  title     = {Shape As Points: A Differentiable {Poisson} Solver},
  booktitle = {Advances in Neural Information Processing Systems (NeurIPS)},
  year      = {2021}
}

@inproceedings{chen2019dibr,
  author    = {Chen, Wenzheng and Gao, Jun and Ling, Huan and Smith, Edward J. and Lehtinen, Jaakko and Jacobson, Alec and Fidler, Sanja},
  title     = {Learning to Predict {3D} Objects with an Interpolation-based Differentiable Renderer},
  booktitle = {Advances in Neural Information Processing Systems (NeurIPS)},
  year      = {2019}
}

@article{yifan2019dss,
  author  = {Yifan, Wang and Serena, Felice and Wu, Shihao and {\"O}ztireli, Cengiz and Sorkine-Hornung, Olga},
  title   = {Differentiable Surface Splatting for Point-based Geometry Processing},
  journal = {ACM Transactions on Graphics},
  volume  = {38},
  number  = {6},
  year    = {2019}
}

@inproceedings{groueix2018atlasnet,
  author    = {Groueix, Thibault and Fisher, Matthew and Kim, Vladimir G. and Russell, Bryan C. and Aubry, Mathieu},
  title     = {{AtlasNet}: A Papier-M{\^a}ch{\'e} Approach to Learning {3D} Surface Generation},
  booktitle = {IEEE/CVF Conference on Computer Vision and Pattern Recognition (CVPR)},
  year      = {2018}
}

@inproceedings{huang20242dgs,
  author    = {Huang, Binbin and Yu, Zehao and Chen, Anpei and Geiger, Andreas and Gao, Shenghua},
  title     = {{2D} {Gaussian} Splatting for Geometrically Accurate Radiance Fields},
  booktitle = {ACM SIGGRAPH 2024 Conference Papers},
  year      = {2024}
}

@inproceedings{guedon2024sugar,
  author    = {Gu{\'e}don, Antoine and Lepetit, Vincent},
  title     = {{SuGaR}: Surface-Aligned {Gaussian} Splatting for Efficient {3D} Mesh Reconstruction and High-Quality Mesh Rendering},
  booktitle = {IEEE/CVF Conference on Computer Vision and Pattern Recognition (CVPR)},
  pages     = {5354--5363},
  year      = {2024}
}

@inproceedings{shen2025topologygs,
  author    = {Shen, Tianqi and Liu, Shaohua and Feng, Jiaqi and Ma, Ziye and An, Ning},
  title     = {Topology-Aware {3D} {Gaussian} Splatting: Leveraging Persistent Homology for Optimized Structural Integrity},
  booktitle = {Proceedings of the AAAI Conference on Artificial Intelligence},
  volume    = {39},
  pages     = {6823--6832},
  year      = {2025}
}

@misc{gao2025genus,
  author       = {Gao, Xiang and Wang, Xinmu and Wu, Xiaolong and Li, Jiazhi and Shi, Jingyu and Guo, Yu and Liu, Yuanpeng and Song, Xiyun and Yu, Heather and Lin, Zongfang and Gu, Xianfeng David},
  title        = {Inverse Rendering for High-Genus Surface Meshes from Multi-View Images},
  howpublished = {arXiv:2511.18680},
  note         = {3DV 2026},
  year         = {2025}
}

@misc{gao2026homology,
  author       = {Gao, Xiang and Wang, Xinmu and Liu, Yuanpeng and Wang, Yue and Huang, Junqi and Chen, Wei and Gu, Xianfeng},
  title        = {Inverse Rendering for High-Genus {3D} Surface Meshes from Multi-view Images with Persistent Homology Priors},
  howpublished = {arXiv:2601.12155},
  note         = {ICASSP 2026},
  year         = {2026}
}

@misc{jignasu2024stitch,
  author       = {Jignasu, Anushrut and Herron, Ethan and Jiang, Zhanhong and Sarkar, Soumik and Hegde, Chinmay and Ganapathysubramanian, Baskar and Balu, Aditya and Krishnamurthy, Adarsh},
  title        = {{STITCH}: Surface Reconstruction using Implicit Neural Representations with Topology Constraints and Persistent Homology},
  howpublished = {arXiv:2412.18696},
  year         = {2024}
}

@inproceedings{waibel2022topological,
  author    = {Waibel, Dominik J. E. and Atwell, Scott and Meier, Matthias and Marr, Carsten and Rieck, Bastian},
  title     = {Capturing Shape Information with Multi-Scale Topological Loss Terms for {3D} Reconstruction},
  booktitle = {Medical Image Computing and Computer Assisted Intervention (MICCAI)},
  pages     = {150--159},
  year      = {2022}
}

@misc{burgdorfer2025radiant,
  author       = {Burgdorfer, Nathaniel and Mordohai, Philippos},
  title        = {Radiant Triangle Soup with Soft Connectivity Forces for {3D} Reconstruction and Novel View Synthesis},
  howpublished = {arXiv:2505.23642},
  year         = {2025}
}

@inproceedings{gropp2020igr,
  author    = {Gropp, Amos and Yariv, Lior and Haim, Niv and Atzmon, Matan and Lipman, Yaron},
  title     = {Implicit Geometric Regularization for Learning Shapes},
  booktitle = {International Conference on Machine Learning (ICML)},
  year      = {2020}
}

@inproceedings{yariv2020idr,
  author    = {Yariv, Lior and Kasten, Yoni and Moran, Dror and Galun, Meirav and Atzmon, Matan and Basri, Ronen and Lipman, Yaron},
  title     = {Multiview Neural Surface Reconstruction by Disentangling Geometry and Appearance},
  booktitle = {Advances in Neural Information Processing Systems (NeurIPS)},
  year      = {2020}
}

@inproceedings{peng2020convoccnet,
  author    = {Peng, Songyou and Niemeyer, Michael and Mescheder, Lars and Pollefeys, Marc and Geiger, Andreas},
  title     = {Convolutional Occupancy Networks},
  booktitle = {European Conference on Computer Vision (ECCV)},
  year      = {2020}
}

@inproceedings{oechsle2021unisurf,
  author    = {Oechsle, Michael and Peng, Songyou and Geiger, Andreas},
  title     = {{UNISURF}: Unifying Neural Implicit Surfaces and Radiance Fields for Multi-View Reconstruction},
  booktitle = {IEEE/CVF International Conference on Computer Vision (ICCV)},
  year      = {2021}
}

@inproceedings{yariv2021volsdf,
  author    = {Yariv, Lior and Gu, Jiatao and Kasten, Yoni and Lipman, Yaron},
  title     = {Volume Rendering of Neural Implicit Surfaces},
  booktitle = {Advances in Neural Information Processing Systems (NeurIPS)},
  year      = {2021}
}

@article{muller2022instantngp,
  author  = {M{\"u}ller, Thomas and Evans, Alex and Schied, Christoph and Keller, Alexander},
  title   = {Instant Neural Graphics Primitives with a Multiresolution Hash Encoding},
  journal = {ACM Transactions on Graphics},
  volume  = {41},
  number  = {4},
  year    = {2022}
}

@inproceedings{yu2022monosdf,
  author    = {Yu, Zehao and Peng, Songyou and Niemeyer, Michael and Sattler, Torsten and Geiger, Andreas},
  title     = {{MonoSDF}: Exploring Monocular Geometric Cues for Neural Implicit Surface Reconstruction},
  booktitle = {Advances in Neural Information Processing Systems (NeurIPS)},
  year      = {2022}
}

@inproceedings{wang2023neus2,
  author    = {Wang, Yiming and Han, Qin and Habermann, Marc and Daniilidis, Kostas and Theobalt, Christian and Liu, Lingjie},
  title     = {{NeuS2}: Fast Learning of Neural Implicit Surfaces for Multi-view Reconstruction},
  booktitle = {IEEE/CVF International Conference on Computer Vision (ICCV)},
  year      = {2023}
}

@inproceedings{yariv2023bakedsdf,
  author    = {Yariv, Lior and Hedman, Peter and Reiser, Christian and Verbin, Dor and Srinivasan, Pratul P. and Szeliski, Richard and Barron, Jonathan T. and Mildenhall, Ben},
  title     = {{BakedSDF}: Meshing Neural {SDFs} for Real-Time View Synthesis},
  booktitle = {ACM SIGGRAPH 2023 Conference Proceedings},
  year      = {2023}
}

@inproceedings{huang2023nksr,
  author    = {Huang, Jiahui and Gojcic, Zan and Atzmon, Matan and Litany, Or and Fidler, Sanja and Williams, Francis},
  title     = {Neural Kernel Surface Reconstruction},
  booktitle = {IEEE/CVF Conference on Computer Vision and Pattern Recognition (CVPR)},
  year      = {2023}
}

@inproceedings{lorensen1987marching,
  author    = {Lorensen, William E. and Cline, Harvey E.},
  title     = {Marching Cubes: A High Resolution {3D} Surface Construction Algorithm},
  booktitle = {Proceedings of the 14th Annual Conference on Computer Graphics and Interactive Techniques (SIGGRAPH)},
  pages     = {163--169},
  year      = {1987}
}

@inproceedings{ju2002dual,
  author    = {Ju, Tao and Losasso, Frank and Schaefer, Scott and Warren, Joe},
  title     = {Dual Contouring of {Hermite} Data},
  booktitle = {Proceedings of the 29th Annual Conference on Computer Graphics and Interactive Techniques (SIGGRAPH)},
  pages     = {339--346},
  year      = {2002}
}

@inproceedings{kazhdan2006poisson,
  author    = {Kazhdan, Michael and Bolitho, Matthew and Hoppe, Hugues},
  title     = {{Poisson} Surface Reconstruction},
  booktitle = {Proceedings of the Fourth Eurographics Symposium on Geometry Processing (SGP)},
  pages     = {61--70},
  year      = {2006}
}

@article{kazhdan2013screened,
  author  = {Kazhdan, Michael and Hoppe, Hugues},
  title   = {Screened {Poisson} Surface Reconstruction},
  journal = {ACM Transactions on Graphics},
  volume  = {32},
  number  = {3},
  year    = {2013}
}

@inproceedings{remelli2020meshsdf,
  author    = {Remelli, Edoardo and Lukoianov, Artem and Richter, Stephan R. and Guillard, Beno{\^i}t and Bagautdinov, Timur and Baque, Pierre and Fua, Pascal},
  title     = {{MeshSDF}: Differentiable Iso-Surface Extraction},
  booktitle = {Advances in Neural Information Processing Systems (NeurIPS)},
  year      = {2020}
}

@inproceedings{gao2020deftet,
  author    = {Gao, Jun and Chen, Wenzheng and Xiang, Tommy and {Fuji Tsang}, Clement and Jacobson, Alec and McGuire, Morgan and Fidler, Sanja},
  title     = {Learning Deformable Tetrahedral Meshes for {3D} Reconstruction},
  booktitle = {Advances in Neural Information Processing Systems (NeurIPS)},
  year      = {2020}
}

@article{chen2021nmc,
  author  = {Chen, Zhiqin and Zhang, Hao},
  title   = {Neural Marching Cubes},
  journal = {ACM Transactions on Graphics},
  volume  = {40},
  number  = {6},
  year    = {2021}
}

@inproceedings{chen2020bspnet,
  author    = {Chen, Zhiqin and Tagliasacchi, Andrea and Zhang, Hao},
  title     = {{BSP-Net}: Generating Compact Meshes via Binary Space Partitioning},
  booktitle = {IEEE/CVF Conference on Computer Vision and Pattern Recognition (CVPR)},
  year      = {2020}
}

@inproceedings{gkioxari2019meshrcnn,
  author    = {Gkioxari, Georgia and Malik, Jitendra and Johnson, Justin},
  title     = {Mesh {R-CNN}},
  booktitle = {IEEE/CVF International Conference on Computer Vision (ICCV)},
  pages     = {9784--9794},
  year      = {2019}
}

@article{hu2018tetwild,
  author  = {Hu, Yixin and Zhou, Qingnan and Gao, Xifeng and Jacobson, Alec and Zorin, Denis and Panozzo, Daniele},
  title   = {Tetrahedral Meshing in the Wild},
  journal = {ACM Transactions on Graphics},
  volume  = {37},
  number  = {4},
  year    = {2018}
}

@inproceedings{zwicker2001splatting,
  author    = {Zwicker, Matthias and Pfister, Hanspeter and van Baar, Jeroen and Gross, Markus},
  title     = {Surface Splatting},
  booktitle = {Proceedings of the 28th Annual Conference on Computer Graphics and Interactive Techniques (SIGGRAPH)},
  pages     = {371--378},
  year      = {2001}
}

@inproceedings{aliev2020npbg,
  author    = {Aliev, Kara-Ali and Sevastopolsky, Artem and Kolos, Maria and Ulyanov, Dmitry and Lempitsky, Victor},
  title     = {Neural Point-Based Graphics},
  booktitle = {European Conference on Computer Vision (ECCV)},
  pages     = {696--712},
  year      = {2020}
}

@inproceedings{lassner2021pulsar,
  author    = {Lassner, Christoph and Zollh{\"o}fer, Michael},
  title     = {Pulsar: Efficient Sphere-Based Neural Rendering},
  booktitle = {IEEE/CVF Conference on Computer Vision and Pattern Recognition (CVPR)},
  pages     = {1440--1449},
  year      = {2021}
}

@article{ruckert2022adop,
  author  = {R{\"u}ckert, Darius and Franke, Linus and Stamminger, Marc},
  title   = {{ADOP}: Approximate Differentiable One-Pixel Point Rendering},
  journal = {ACM Transactions on Graphics},
  volume  = {41},
  number  = {4},
  year    = {2022}
}

@inproceedings{xu2022pointnerf,
  author    = {Xu, Qiangeng and Xu, Zexiang and Philip, Julien and Bi, Sai and Shu, Zhixin and Sunkavalli, Kalyan and Neumann, Ulrich},
  title     = {Point-{NeRF}: Point-Based Neural Radiance Fields},
  booktitle = {IEEE/CVF Conference on Computer Vision and Pattern Recognition (CVPR)},
  pages     = {5438--5448},
  year      = {2022}
}

@inproceedings{yu2024mipsplatting,
  author    = {Yu, Zehao and Chen, Anpei and Huang, Binbin and Sattler, Torsten and Geiger, Andreas},
  title     = {Mip-Splatting: Alias-Free {3D} {Gaussian} Splatting},
  booktitle = {IEEE/CVF Conference on Computer Vision and Pattern Recognition (CVPR)},
  year      = {2024}
}

@inproceedings{fang2024minisplatting,
  author    = {Fang, Guangchi and Wang, Bing},
  title     = {Mini-Splatting: Representing Scenes with a Constrained Number of {Gaussians}},
  booktitle = {European Conference on Computer Vision (ECCV)},
  year      = {2024}
}

@inproceedings{dai2024surfels,
  author    = {Dai, Pinxuan and Xu, Jiamin and Xie, Wenxiang and Liu, Xinguo and Wang, Huamin and Xu, Weiwei},
  title     = {High-Quality Surface Reconstruction Using {Gaussian} Surfels},
  booktitle = {ACM SIGGRAPH 2024 Conference Papers},
  year      = {2024}
}

@article{yu2024gof,
  author  = {Yu, Zehao and Sattler, Torsten and Geiger, Andreas},
  title   = {{Gaussian} Opacity Fields: Efficient Adaptive Surface Reconstruction in Unbounded Scenes},
  journal = {ACM Transactions on Graphics},
  volume  = {43},
  number  = {6},
  year    = {2024}
}

@article{chen2024pgsr,
  author  = {Chen, Danpeng and Li, Hai and Ye, Weicai and Wang, Yifan and Xie, Weijian and Zhai, Shangjin and Wang, Nan and Liu, Haomin and Bao, Hujun and Zhang, Guofeng},
  title   = {{PGSR}: Planar-Based {Gaussian} Splatting for Efficient and High-Fidelity Surface Reconstruction},
  journal = {IEEE Transactions on Visualization and Computer Graphics},
  year    = {2024}
}

@article{guedon2025milo,
  author  = {Gu{\'e}don, Antoine and Gomez, Diego and Maruani, Nissim and Gong, Bingchen and Drettakis, George and Ovsjanikov, Maks},
  title   = {{MILo}: Mesh-in-the-Loop {Gaussian} Splatting for Detailed and Efficient Surface Reconstruction},
  journal = {ACM Transactions on Graphics},
  volume  = {44},
  number  = {6},
  year    = {2025}
}

@article{zomorodian2005computing,
  author  = {Zomorodian, Afra and Carlsson, Gunnar},
  title   = {Computing Persistent Homology},
  journal = {Discrete \& Computational Geometry},
  volume  = {33},
  pages   = {249--274},
  year    = {2005}
}

@article{bauer2021ripser,
  author  = {Bauer, Ulrich},
  title   = {{Ripser}: Efficient Computation of {Vietoris--Rips} Persistence Barcodes},
  journal = {Journal of Applied and Computational Topology},
  volume  = {5},
  pages   = {391--423},
  year    = {2021}
}

@article{kerber2017geometry,
  author  = {Kerber, Michael and Morozov, Dmitriy and Nigmetov, Arnur},
  title   = {Geometry Helps to Compare Persistence Diagrams},
  journal = {ACM Journal of Experimental Algorithmics},
  volume  = {22},
  year    = {2017}
}

@article{adams2017images,
  author  = {Adams, Henry and Emerson, Tegan and Kirby, Michael and Neville, Rachel and Peterson, Chris and Shipman, Patrick and Chepushtanova, Sofya and Hanson, Eric and Motta, Francis and Ziegelmeier, Lori},
  title   = {Persistence Images: A Stable Vector Representation of Persistent Homology},
  journal = {Journal of Machine Learning Research},
  volume  = {18},
  number  = {8},
  pages   = {1--35},
  year    = {2017}
}

@article{bubenik2015landscapes,
  author  = {Bubenik, Peter},
  title   = {Statistical Topological Data Analysis using Persistence Landscapes},
  journal = {Journal of Machine Learning Research},
  volume  = {16},
  number  = {3},
  pages   = {77--102},
  year    = {2015}
}

@inproceedings{hofer2017signatures,
  author    = {Hofer, Christoph and Kwitt, Roland and Niethammer, Marc and Uhl, Andreas},
  title     = {Deep Learning with Topological Signatures},
  booktitle = {Advances in Neural Information Processing Systems (NIPS)},
  year      = {2017}
}

@inproceedings{hofer2020filtration,
  author    = {Hofer, Christoph and Graf, Florian and Rieck, Bastian and Niethammer, Marc and Kwitt, Roland},
  title     = {Graph Filtration Learning},
  booktitle = {International Conference on Machine Learning (ICML)},
  pages     = {4314--4323},
  year      = {2020}
}

@inproceedings{carriere2020perslay,
  author    = {Carri{\`e}re, Mathieu and Chazal, Fr{\'e}d{\'e}ric and Ike, Yuichi and Lacombe, Th{\'e}o and Royer, Martin and Umeda, Yuhei},
  title     = {{PersLay}: A Neural Network Layer for Persistence Diagrams and New Graph Topological Signatures},
  booktitle = {International Conference on Artificial Intelligence and Statistics (AISTATS)},
  pages     = {2786--2796},
  year      = {2020}
}

@inproceedings{moor2020topoae,
  author    = {Moor, Michael and Horn, Max and Rieck, Bastian and Borgwardt, Karsten},
  title     = {Topological Autoencoders},
  booktitle = {International Conference on Machine Learning (ICML)},
  pages     = {7045--7054},
  year      = {2020}
}

@inproceedings{shit2021cldice,
  author    = {Shit, Suprosanna and Paetzold, Johannes C. and Sekuboyina, Anjany and Ezhov, Ivan and Unger, Alexander and Zhylka, Andrey and Pluim, Josien P. W. and Bauer, Ulrich and Menze, Bjoern H.},
  title     = {{clDice} -- A Novel Topology-Preserving Loss Function for Tubular Structure Segmentation},
  booktitle = {IEEE/CVF Conference on Computer Vision and Pattern Recognition (CVPR)},
  year      = {2021}
}

@inproceedings{stucki2023betti,
  author    = {Stucki, Nico and Paetzold, Johannes C. and Shit, Suprosanna and Menze, Bjoern and Bauer, Ulrich},
  title     = {Topologically Faithful Image Segmentation via Induced Matching of Persistence Barcodes},
  booktitle = {International Conference on Machine Learning (ICML)},
  pages     = {32698--32727},
  year      = {2023}
}

@article{chazal2021introduction,
  author  = {Chazal, Fr{\'e}d{\'e}ric and Michel, Bertrand},
  title   = {An Introduction to Topological Data Analysis: Fundamental and Practical Aspects for Data Scientists},
  journal = {Frontiers in Artificial Intelligence},
  volume  = {4},
  pages   = {667963},
  year    = {2021}
}

@misc{carriere2026survey,
  author       = {Carri{\`e}re, Mathieu and Ike, Yuichi and Lacombe, Th{\'e}o and Nishikawa, Naoki},
  title        = {Persistence-Based Topological Optimization: A Survey},
  howpublished = {arXiv:2603.24613},
  year         = {2026}
}

@inproceedings{guskov2001topological,
  author    = {Guskov, Igor and Wood, Zo{\"e}},
  title     = {Topological Noise Removal},
  booktitle = {Proceedings of Graphics Interface},
  pages     = {19--26},
  year      = {2001}
}

@article{wood2004excess,
  author  = {Wood, Zo{\"e} and Hoppe, Hugues and Desbrun, Mathieu and Schr{\"o}der, Peter},
  title   = {Removing Excess Topology from Isosurfaces},
  journal = {ACM Transactions on Graphics},
  volume  = {23},
  number  = {2},
  pages   = {190--208},
  year    = {2004}
}

@article{attene2013repairing,
  author  = {Attene, Marco and Campen, Marcel and Kobbelt, Leif},
  title   = {Polygon Mesh Repairing: An Application Perspective},
  journal = {ACM Computing Surveys},
  volume  = {45},
  number  = {2},
  year    = {2013}
}

@misc{huang2020manifoldplus,
  author       = {Huang, Jingwei and Zhou, Yichao and Guibas, Leonidas},
  title        = {{ManifoldPlus}: A Robust and Scalable Watertight Manifold Surface Generation Method for Triangle Soups},
  howpublished = {arXiv:2005.11621},
  year         = {2020}
}

@misc{zhou2016thingi10k,
  author       = {Zhou, Qingnan and Jacobson, Alec},
  title        = {{Thingi10K}: A Dataset of 10,000 {3D}-Printing Models},
  howpublished = {arXiv:1605.04797},
  year         = {2016}
}

@inproceedings{schonberger2016sfm,
  author    = {Sch{\"o}nberger, Johannes L. and Frahm, Jan-Michael},
  title     = {Structure-from-Motion Revisited},
  booktitle = {IEEE Conference on Computer Vision and Pattern Recognition (CVPR)},
  pages     = {4104--4113},
  year      = {2016}
}

@inproceedings{schonberger2016pixelwise,
  author    = {Sch{\"o}nberger, Johannes L. and Zheng, Enliang and Frahm, Jan-Michael and Pollefeys, Marc},
  title     = {Pixelwise View Selection for Unstructured Multi-View Stereo},
  booktitle = {European Conference on Computer Vision (ECCV)},
  pages     = {501--518},
  year      = {2016}
}

@article{furukawa2010pmvs,
  author  = {Furukawa, Yasutaka and Ponce, Jean},
  title   = {Accurate, Dense, and Robust Multiview Stereopsis},
  journal = {IEEE Transactions on Pattern Analysis and Machine Intelligence},
  volume  = {32},
  number  = {8},
  pages   = {1362--1376},
  year    = {2010}
}

@inproceedings{jensen2014dtu,
  author    = {Jensen, Rasmus and Dahl, Anders and Vogiatzis, George and Tola, Engin and Aan{\ae}s, Henrik},
  title     = {Large Scale Multi-view Stereopsis Evaluation},
  booktitle = {IEEE Conference on Computer Vision and Pattern Recognition (CVPR)},
  pages     = {406--413},
  year      = {2014}
}

@article{knapitsch2017tanks,
  author  = {Knapitsch, Arno and Park, Jaesik and Zhou, Qian-Yi and Koltun, Vladlen},
  title   = {Tanks and Temples: Benchmarking Large-Scale Scene Reconstruction},
  journal = {ACM Transactions on Graphics},
  volume  = {36},
  number  = {4},
  year    = {2017}
}

@misc{hu2024topodiffusion,
  author       = {Hu, Jiangbei and Fei, Ben and Xu, Baixin and Hou, Fei and Yang, Weidong and Wang, Shengfa and Lei, Na and Qian, Chen and He, Ying},
  title        = {Topology-Aware Latent Diffusion for {3D} Shape Generation},
  howpublished = {arXiv:2401.17603},
  year         = {2024}
}

@inproceedings{roell2024dect,
  author    = {Roell, Ernst and Rieck, Bastian},
  title     = {Differentiable Euler Characteristic Transforms for Shape Classification},
  booktitle = {International Conference on Learning Representations (ICLR)},
  year      = {2024}
}
}

\clearpage
\appendix
\setcounter{figure}{0}
\setcounter{table}{0}
\renewcommand{\thetable}{S\arabic{table}}
\renewcommand{\thefigure}{S\arabic{figure}}

\twocolumn[{%
  \centering
  \vspace*{1em}
  {\Large\bf Supplementary Material}\\[6pt]
  \normalsize Appendices A--H. The main text refers to these sections as
  ``suppl.\ \S A''\,--\,``suppl.\ \S H'' and to their floats as
  ``Table~S1'', ``Fig.~S1'', and so on.\\
  \vspace*{1.5em}
}]

\section{The allocation-channel study, in brief}
\label{app:companion}

This appendix reports, with their numbers, the three findings of our
allocation-channel study --- controlled resampling-prior experiments on the
same pipeline, scenes, and budgets, conducted before moving topology into
the objective --- that this paper builds on.

\paragraph{(1) Geometric metrics are topology-blind.} Three constructed
cases each pit a topologically correct candidate A against a wrong one B
whose geometry is bisection-matched, making Chamfer \emph{equal by
construction} (40k samples, metrics normalized by the ground-truth bounding
diagonal; Table~\ref{tab:blindness}). The discriminating-dimension
bottleneck separates every pair --- by $35$--$40\times$ in the H0/H2 cases
--- while in the first two cases Hausdorff$_{95}$ actually \emph{prefers}
the topologically wrong candidate. Chamfer cannot tell a benign bump from a
destroyed void; the persistence metric can.

\begin{table*}[t]
\centering
\caption{Blindness probes: A topologically correct, B wrong, geometry
bisection-matched so Chamfer is equal by construction. Bottleneck to ground
truth in the discriminating dimension.}
\label{tab:blindness}
\small
\begin{tabular}{lcccc}
\toprule
case (disc.\ dim) & Cham.\% A $=$ B & bott.\ A & bott.\ B & ratio \\
\midrule
thin handle, genus $0{\to}1$ (H1) & 0.916 & ${\approx}0$ & .0302 & --- \\
bridge merges two components (H0) & 0.628 & .0014 & .0574 & $40\times$ \\
punctured enclosed void (H2)      & 0.977 & .0040 & .1385 & $35\times$ \\
\bottomrule
\end{tabular}
\end{table*}

\paragraph{(2) The prior arms, and why voids are ``width-primary''.} The
study biases DiffSoup's own resampler with a precomputed importance field
built from the target's significant features. Each significant feature $f$
(lifetime $p_f$ above the significance threshold) contributes Gaussian
kernels at the coordinates $C_f$ of its birth/death-simplex vertices ---
the loop-filling triangle, void-filling tetrahedron, or component-merging
edge --- sized by the feature's own death scale:
\begin{equation}
\begin{gathered}
\phi_d(q)=\!\!\sum_{f:\,\dim f=d}\,\sum_{c\in C_f}\! w_f\,
  e^{-\lVert q-c\rVert^2/(2\sigma_f^2)},\qquad w_f = p_f/s^2,\\
\sigma_f = s\,\mathrm{clip}\!\big(\sqrt{\smash[b]{\mathrm{death}_f}},\,
  3r_{\mathrm{med}},\,\tfrac{1}{2}\mathrm{diag}\big),
\end{gathered}
\end{equation}
evaluated on a dense target-surface sample, normalized per dimension to
$[0,1]$ (99.5th percentile), combined by max; a query takes its nearest
surface sample's value. The spread knob $s$ widens each kernel while
preserving its injected surface mass ($\sigma\,{\to}\,s\sigma$,
$w\,{\to}\,w/s^2$, so $w\sigma^2$ is $s$-invariant): $s{=}1$ is the
concentrated prior (B2), $s{=}3$ the spread prior (B4) --- the field C5
reuses. The field enters resampling at strict budget parity through two
levers: prune protection --- faces drop in order of $\mathrm{vis} +
\lambda\,Q_{0.9}(\mathrm{vis})\cdot\mathrm{imp}$, removing exactly the
required count --- and respawn bias --- top-importance visible faces
subdivide first, then the protective order re-prunes to the cap. Remaining
arms: B0 baseline; B1 = B2 applied at initialization only; B3/B5
random-center non-topological controls width-matched to B2/B4 respectively.
On the void class
(Table~\ref{tab:companion}; $N{=}1200$, five paired seeds) B4 cuts the
bottleneck $33$--$53\%$ below baseline --- but the width-matched random
control B5 recovers most of that gain, leaving only a small, shape-dependent
topological residual (cylinder $4.6\sigma$, sphere $2.2\sigma$, cube tie);
one-shot placement washes out entirely (B1 $\approx$ B0, $.0546$ vs $.0535$
on the sphere). Hence the verdict quoted in the main paper's introduction:
the prior's topological value is carried largely by its spatial
\emph{width}.

\begin{table}[t]
\centering
\caption{Allocation study, void class (H2), $N{=}1200$, five paired seeds:
tail bottleneck to target (lower is better). Spread beats concentrated;
the width-matched random control (B5) recovers most of the spread gain.}
\label{tab:companion}
\footnotesize
\setlength{\tabcolsep}{3pt}
\begin{tabular}{lccccc}
\toprule
 & & B2 & B3 & B4 & B5 \\
shape & B0 & conc. & narrow rand. & spread & wide rand. \\
\midrule
sphere   & .0535 & .0438 & .0440 & \textbf{.0357} & .0374 \\
cube     & .0579 & .0409 & .0403 & .0273 & \textbf{.0270} \\
cylinder & .0545 & .0440 & .0477 & \textbf{.0360} & .0402 \\
\bottomrule
\end{tabular}
\end{table}

\paragraph{(3) No prior shape repairs loops.} On the torus ($N{=}700$, five
paired seeds, true $\beta_1{=}2$) every arm fails the topology-specificity
test: the best condition is the \emph{non-topological} wide control B5
($.0267$), ahead of spread-topological B4 ($.0316$), baseline ($.0409$), and
narrow random B3 ($.0397$); the concentrated topological field B2 is worst
($.0425$) \emph{and} manufactures phantom handles --- final significant-H1
count $4.4$ against the true $2.0$, with the worst Chamfer ($1.32$ vs
baseline $1.01$). Spreading merely stops the harm; it never beats the random
control. That failure is the opening premise of the main paper, and the
class the loss channel wins there (main paper \S5.1, Table~1).

\section{Diagnostics: density sensitivity and diagram trajectories}
\label{app:diagnostics}

\paragraph{Sampling density vs.\ measurability.}
Table~\ref{tab:msens} counts the significant bars of the clean target
surfaces across sampling densities under the self-calibrating floor
$6\,r_{\mathrm{med}}(M)$ of the main paper's \S6. It confirms the three
density claims made in the paper: the sphere's void clears the floor at
every tested density; the torus needs $M{=}2048$ before its \emph{second}
loop is measurable, and its own void becomes measurable only at $M{=}4096$
(hence the H1 restriction at $M{=}2048$); and the double torus's tube
loops cross the floor between $M{=}2048$ and $M{=}4096$ --- consistent
with the $M\gtrsim3{\times}10^{3}$ estimate there ---
so all four loops are measurable at $M{=}4096$.

\begin{table}[t]
\centering\small
\caption{Significant bars (H1 / H2) on the clean target surfaces vs.\
sampling density $M$; floor $=6\,r_{\mathrm{med}}(M)$, seed 0.}
\label{tab:msens}
\setlength{\tabcolsep}{4pt}
\begin{tabular}{lcccc}
\toprule
 & $M{=}512$ & $M{=}1024$ & $M{=}2048$ & $M{=}4096$ \\
\midrule
sphere       & 0 / 1 & 0 / 1 & 0 / 1 & 0 / 1 \\
torus        & 1 / 0 & 1 / 0 & 2 / 0 & 2 / 1 \\
double torus & 0 / 0 & 2 / 0 & 2 / 0 & 4 / 0 \\
\bottomrule
\end{tabular}
\end{table}

\paragraph{Diagram trajectories: the pinning mechanism, visualized.}
Fig.~\ref{fig:pdtraj} shows the live H2 diagrams of one sphere seed under
C0/C1/C2 at four training steps. All three arms start with the void
essentially on target (initialization samples the SfM point cloud), and
photometric training alone drags it away (the baseline settles
${\approx}.06$ below). The loss arm dips identically while the ramp is
inactive, then pulls the void back to within ${\approx}.016$ of target ---
matching its measured tail. The control arm is qualitatively different:
the void's death is dragged progressively down (${\approx}.23 \to .15 \to
.13$) while a carpet of phantom mid-persistence bars accumulates near the
diagonal --- the repulsion-equilibrium signature discussed in the main
paper's \S6.

\begin{figure}[tb]
\centering
\includegraphics[width=\linewidth]{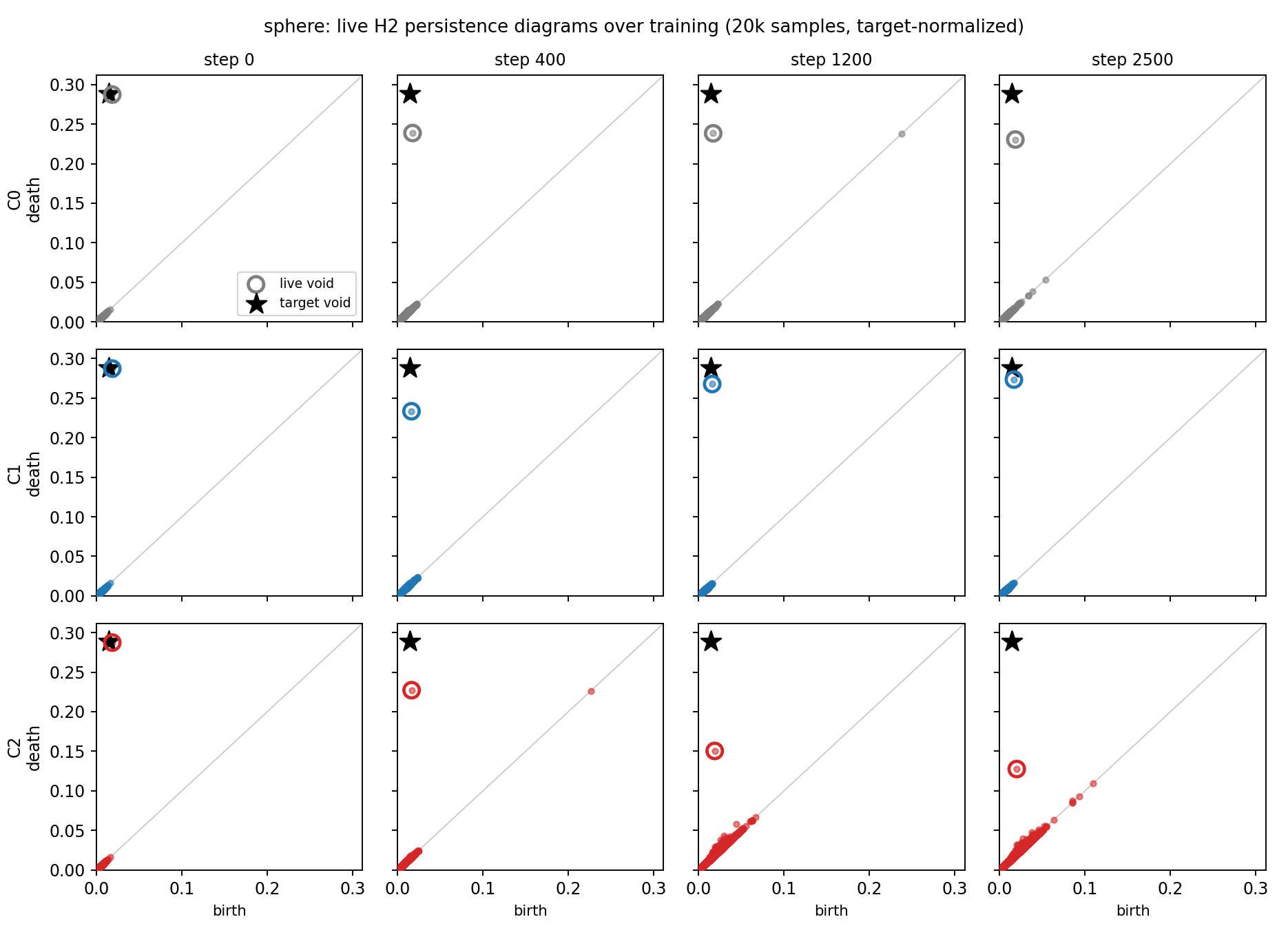}
\caption{Live H2 persistence diagrams over training (one sphere seed; 20k
samples, target-normalized; star $=$ target void bar, circle $=$ most
persistent live bar). \emph{Top}, baseline: the void drifts off target.
\emph{Middle}, loss: the void returns to the target once the ramp
activates. \emph{Bottom}, repulsion control: the void is pinned and
dragged away while near-diagonal phantom bars accumulate.}
\label{fig:pdtraj}
\end{figure}

\paragraph{The genus-2 stretch case, in full.}
On the double torus every arm --- baseline, loss, control --- returns a
bottleneck of exactly $.0264$ (sd $0$ over five seeds; raised from two
in revision, identical values throughout), and Wasserstein is
equally pinned ($.0527$--$.0528$): the two tube loops are unexpressable
at feasible budgets (replicating the allocation study's floor, \S A) and
dominate both diagram metrics regardless of guidance. The loss adds no
phantoms ($\#$sig $\beta_1$ $=2$ every run) at marginally best Chamfer.

\paragraph{Sensor-noise internals (C7/C7h).}
Recruitment absorbs the injected noise: at $\sigma{=}1\%$ the torus's
second loop is sub-threshold at every refresh, with zero unreached
targets. Calibration is the noisiest stage ($\lambda_{\mathrm{peak}}$
spreads ${\approx}3\times$ across seeds), yet tails stay tight ---
consistent with the flat $\rho$ response reported in the main paper.

\paragraph{The tom-yum pot: challenge inventory and prospective floor
protocol.}
The generality wave's artist mesh (main paper \S4, \S5.2) compounds the
failure modes real scans will bring. Table~\ref{tab:potchal} inventories
them with the measured consequence of each --- every row is backed by
the ingest certificate or by a measurement in this study, none is
asserted. Table~\ref{tab:potstair} records the staircase behind the flue
row: the certified shell's significant-bar signature across sampling
density, which fixed the observable class (H2) and the bundle density
($M{=}8192$) \emph{before} any training run.

\begin{table}[tb]
\centering
\caption{Tom-yum pot challenge inventory: what makes the shape hard, and
the measured consequence in this study.}
\label{tab:potchal}
\footnotesize
\setlength{\tabcolsep}{3pt}
\renewcommand{\arraystretch}{1.15}
\begin{tabular}{p{0.315\linewidth}p{0.615\linewidth}}
\toprule
challenge & measured consequence \\
\midrule
raw artist export ($6{,}318$ vertices): 9 open shells, mixed quad/12-gon
faces, 50 boundary
edges after exact weld, 232 non-manifold junction edges & no trustworthy
ground-truth topology as exported; ingest-and-certify required (weld,
offset-solidify into a thin shell via an exact unsigned distance field
$+$ marching cubes at iso $\varepsilon$, drop 804 enclosed pocket
shells) \\
certified shell: one body, genus 3, $\beta{=}(1,6,1)$ exact,
$190{,}990$ vertices & ground truth known by \emph{certificate} (exact
simplicial homology; watertightness cross-check; bit-identical
rebuilds), not by construction \\
six thin handle/vent loops & every H1 loop sits below the significance
floor at every working density --- H1 is unusable as the loss
observable \\
narrow flue mouth, capping at small $\alpha$ & the flue interior reads
as an enclosed on-axis chamber --- an H2 void, first measurable at
$M{=}8192$ (margin $1.20$--$1.23\times$, five sampling seeds); the
floor rule picks class and density prospectively
(Table~\ref{tab:potstair}) \\
in-loop outcome & C1 cuts the void's error $3.9\times$ at Chamfer
parity; the repulsion control \emph{destroys} it ($\beta_2\,1{\to}0$,
all seeds, $1.36\times$ worse Chamfer) --- main paper Table~2 \\
\bottomrule
\end{tabular}
\end{table}

\begin{table}[tb]
\centering
\caption{Tom-yum pot staircase (seed 0; the $M{=}8192$ margin spans
seeds 0--4): significant-bar signature
$(\beta_0,\beta_1,\beta_2)$ of the certified shell vs.\ sampling density
$M$, floor $=6\,r_{\mathrm{med}}(M)$. The bundle is built at the first
density where the discriminating feature clears ($M{=}8192$: exactly one
significant bar, H2, stable to re-sampling at ${\sim}10^{-5}$).}
\label{tab:potstair}
\footnotesize
\setlength{\tabcolsep}{5pt}
\begin{tabular}{lcccc}
\toprule
 & $M{=}2048$ & $M{=}4096$ & $M{=}8192$ & $M{=}20000$ \\
\midrule
sig.\ bars & $(1,0,0)$ & $(1,0,0)$ & $(1,0,1)$ & $(1,0,1)$ \\
\bottomrule
\end{tabular}
\end{table}

\section{Additional result figures and the repair gate}
\label{app:extrafigs}

Fig.~\ref{fig:gen} shows the generality-wave training trajectories
(main paper \S5.2, Table~2); Fig.~\ref{fig:tails} the per-seed tail
spread behind the sphere row of main-paper Table~1;
Fig.~\ref{fig:nsig} the torus significant-count check behind
\S5.1's zero-phantom-handles claim.

\begin{figure*}[t]
\centering
\includegraphics[width=0.495\linewidth]{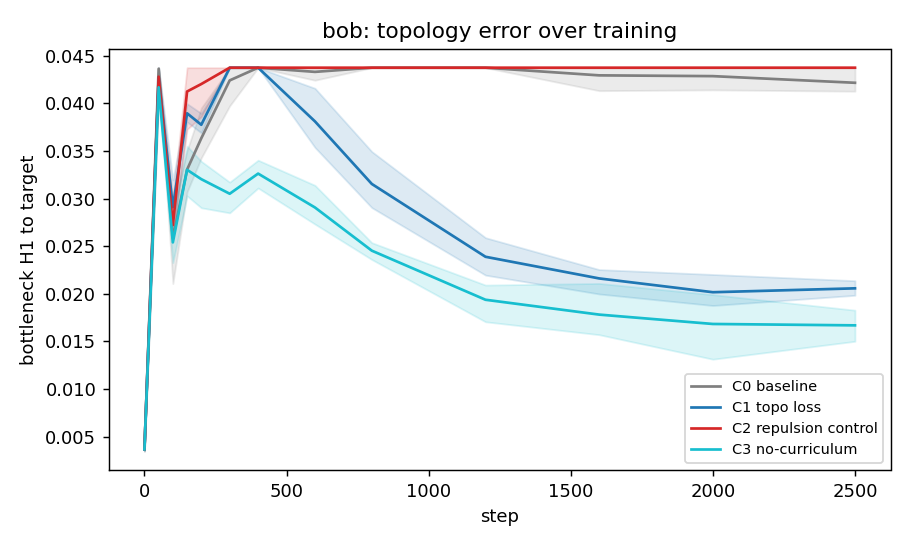}\hfill
\includegraphics[width=0.495\linewidth]{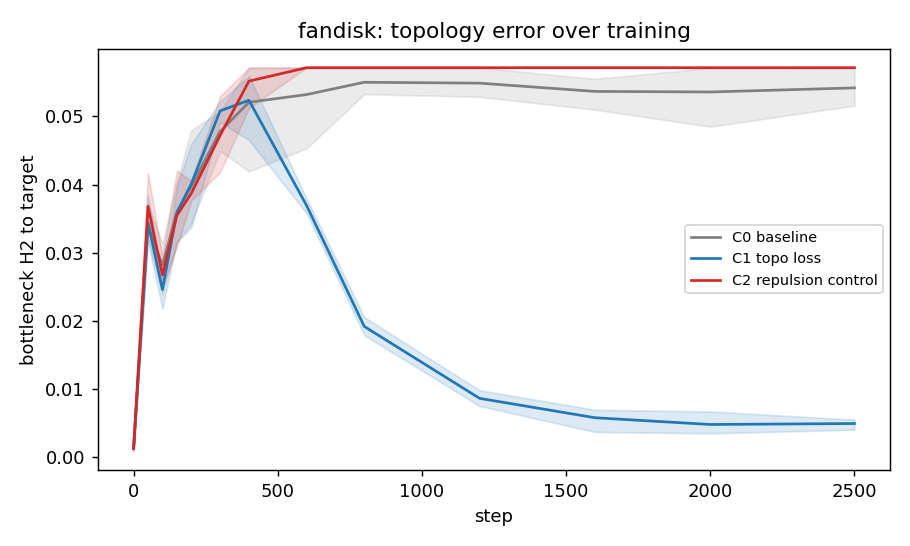}
\caption{Generality trajectories (line = seed mean, band = seed min--max).
\emph{Left}: bob (genus 1) --- the H1 result off the analytic family: the
loss halves the loop error with $\#$sig H1 $= 2$ in every seed, while the
control rides the baseline after collapsing a loop. \emph{Right}: fandisk
(CAD) --- the largest effect measured in this study ($10.4\times$).}
\label{fig:gen}
\end{figure*}

\begin{figure}[t]
\centering
\includegraphics[width=0.9\linewidth]{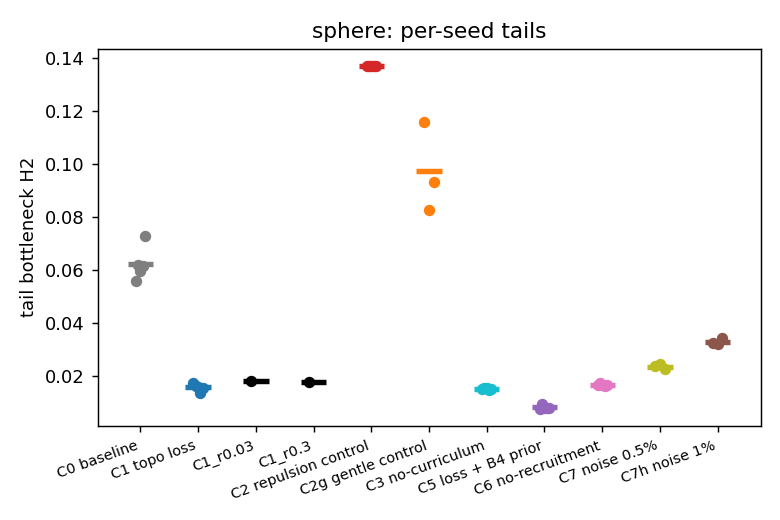}
\caption{Sphere, per-seed tail bottlenecks (dots) with arm means (bars).
Every loss arm sits far below baseline with tight seed spread; the two
control strengths bracket baseline from \emph{above}; the $\rho$ ablation
arms are indistinguishable from $\rho{=}0.1$.}
\label{fig:tails}
\end{figure}

\begin{figure}[t]
\centering
\includegraphics[width=0.85\linewidth]{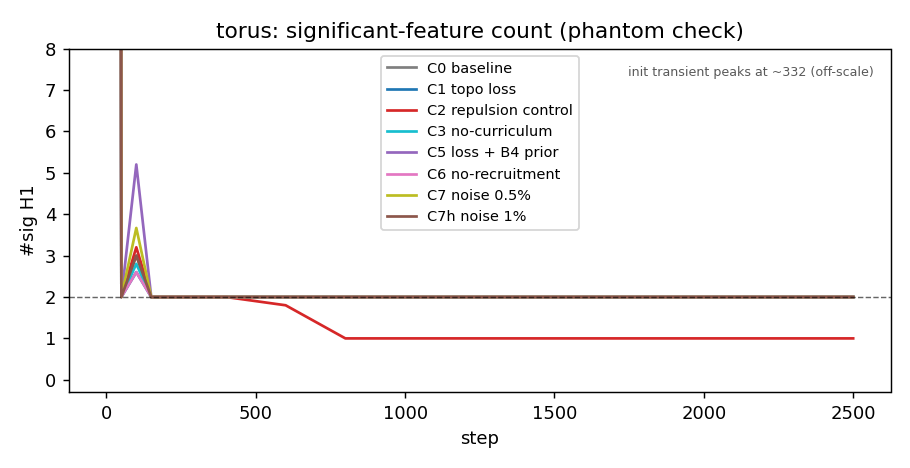}
\caption{Torus phantom check (main paper \S5.1): significant-H1 count over
training (seed means; dashed line: true $\beta_1{=}2$). After the brief
resampling transient near step 100, every loss arm sits exactly on two
significant loops --- zero phantom handles --- while the repulsion control
(C2) collapses a loop by step ${\sim}800$ and never recovers it.}
\label{fig:nsig}
\end{figure}

\paragraph{The stage-3a gate (main paper \S5).} Fig.~\ref{fig:toys} shows
three of the four repair probes run before integration: the loss alone
(no photometric term, Adam directly on point positions) must repair
defective clouds. The predicted location-blind pathology of recruitment
did not materialize, for a mechanism reason worth stating: tearing a
2-manifold shell cannot grow a component bar past the local detour scale
(the tug is abandoned), while cutting the 1-D bridge raises the recruited
bar's death monotonically toward target --- the runaway concentrates on
the only profitable direction.

\begin{figure}[t]
\centering
\includegraphics[width=\linewidth]{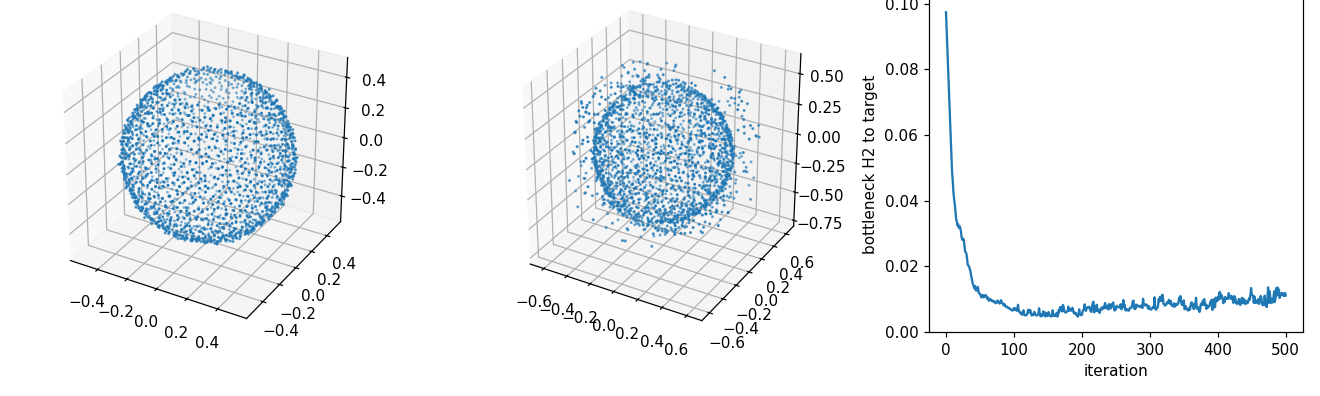}\\[2pt]
\includegraphics[width=\linewidth]{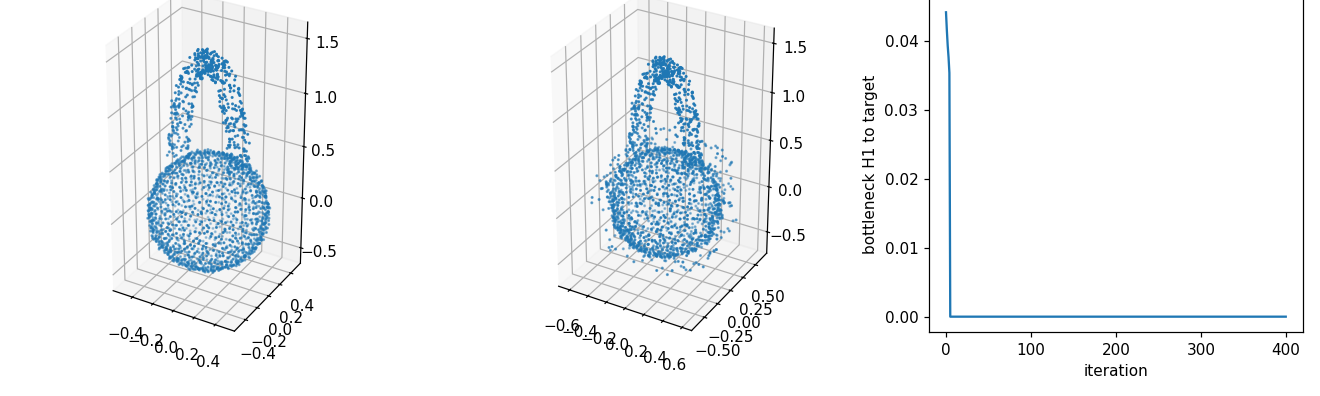}\\[2pt]
\includegraphics[width=\linewidth]{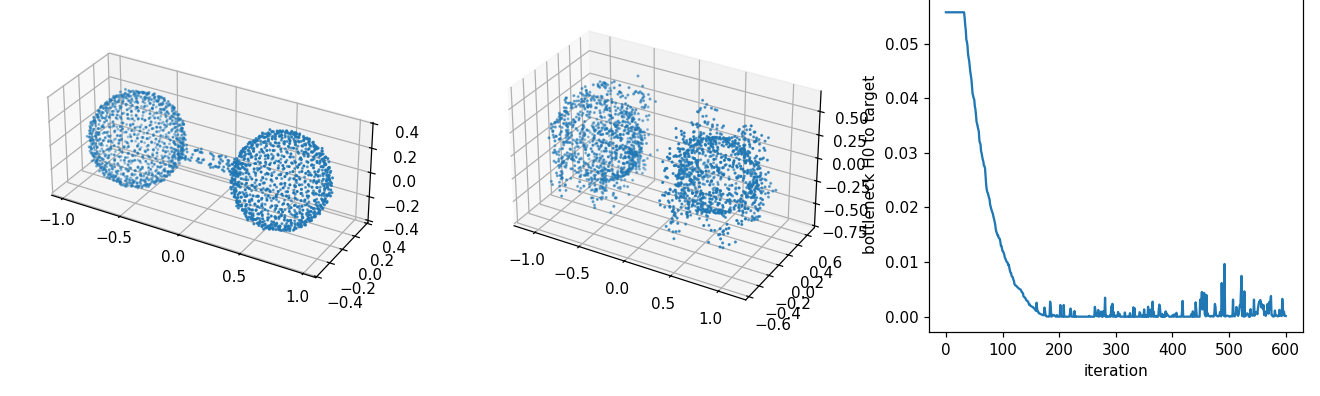}
\caption{The stage-3a gate: the loss alone (Adam on raw points, no
photometric term) repairs defective clouds; each row shows initial cloud,
final cloud, and bottleneck-to-target trajectory. \emph{Top} (T1):
punctured sphere --- the matched birth term closes the void ($8.8\times$).
\emph{Middle} (T2): spurious handle --- the diagonal term kills the extra
H1 bar. \emph{Bottom} (T4): bridged merge with no significant live H0 bar
--- recruitment severs the bridge ($450\times$, $\beta_0\,1{\to}2$), the
case where plain optimal matching provably has zero gradient. (T3,
mis-spaced components, $1229\times$: omitted for space.)}
\label{fig:toys}
\end{figure}

\begin{figure*}[t]
\centering
\includegraphics[width=0.495\linewidth]{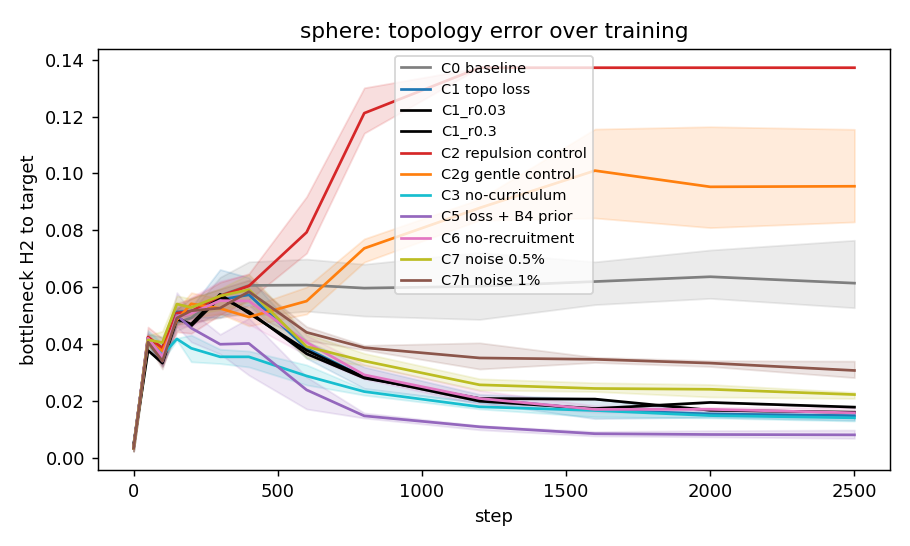}\hfill
\includegraphics[width=0.495\linewidth]{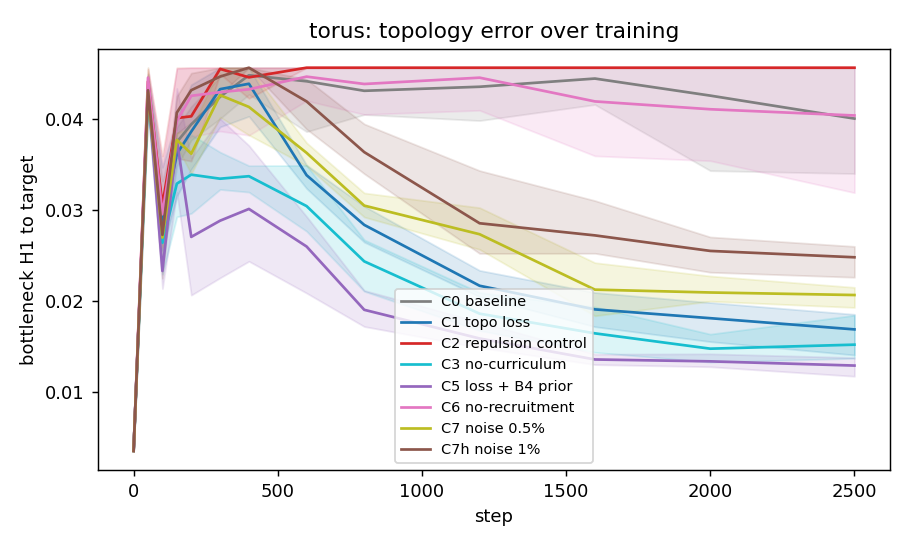}
\caption{Topology error over training: bottleneck to target in the
discriminating dimension (line = seed mean, band = seed min--max).
\emph{Left}, sphere (H2): every loss arm (C1, C3, C5, C6, both $\rho$
variants, noise pair C7/C7h) converges to a fraction of baseline; the
norm-matched repulsion control C2 is \emph{destructive}, and its gentle
variant C2g still ends worse than baseline. \emph{Right}, torus (H1), the
class no prior won: same ordering, C5 best; noise arms descend shallower
but never approach baseline; the no-recruitment ablation C6 rides it.}
\label{fig:mainseries}
\end{figure*}

\begin{figure}[t]
\centering
\includegraphics[width=0.365\linewidth]{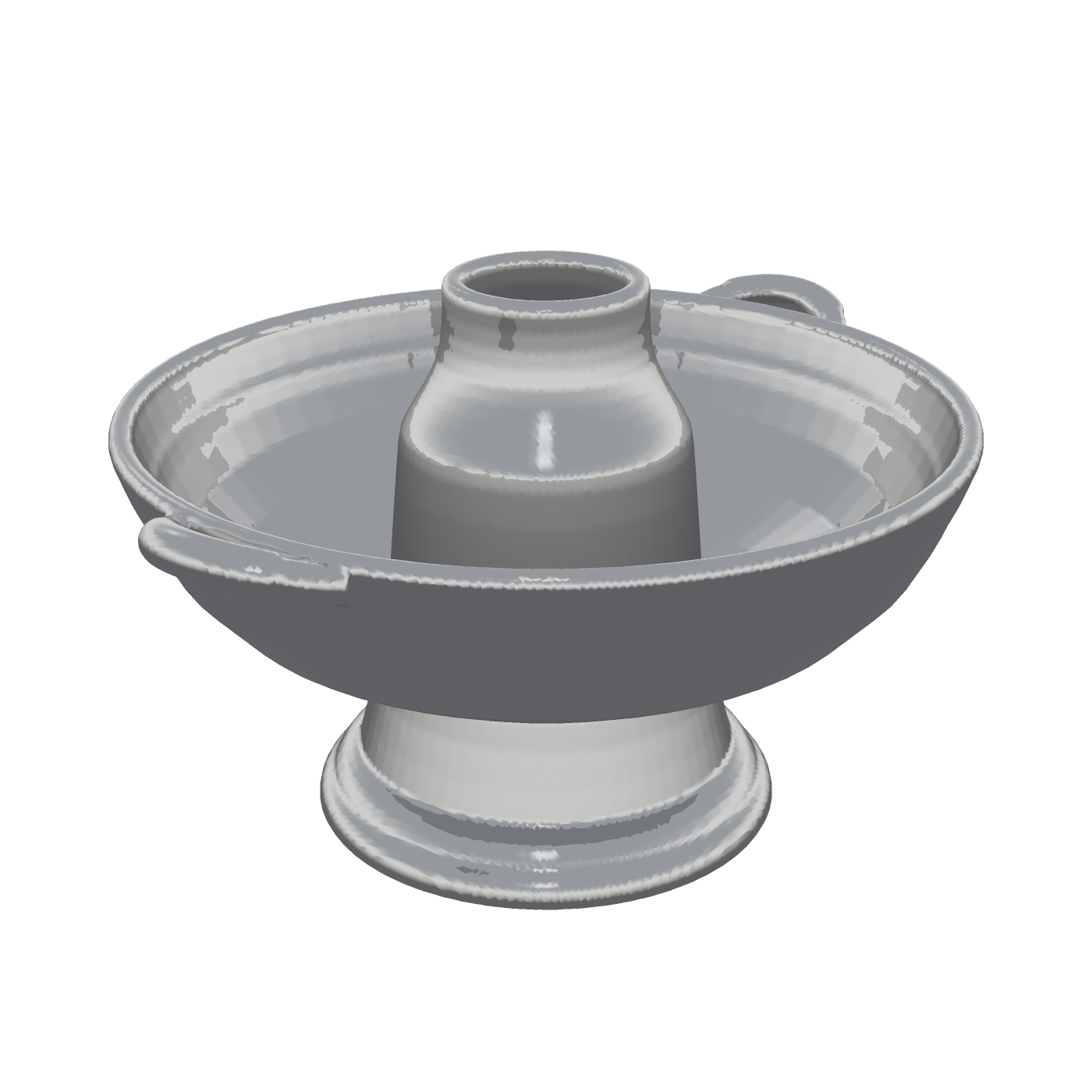}\hfill
\includegraphics[width=0.365\linewidth]{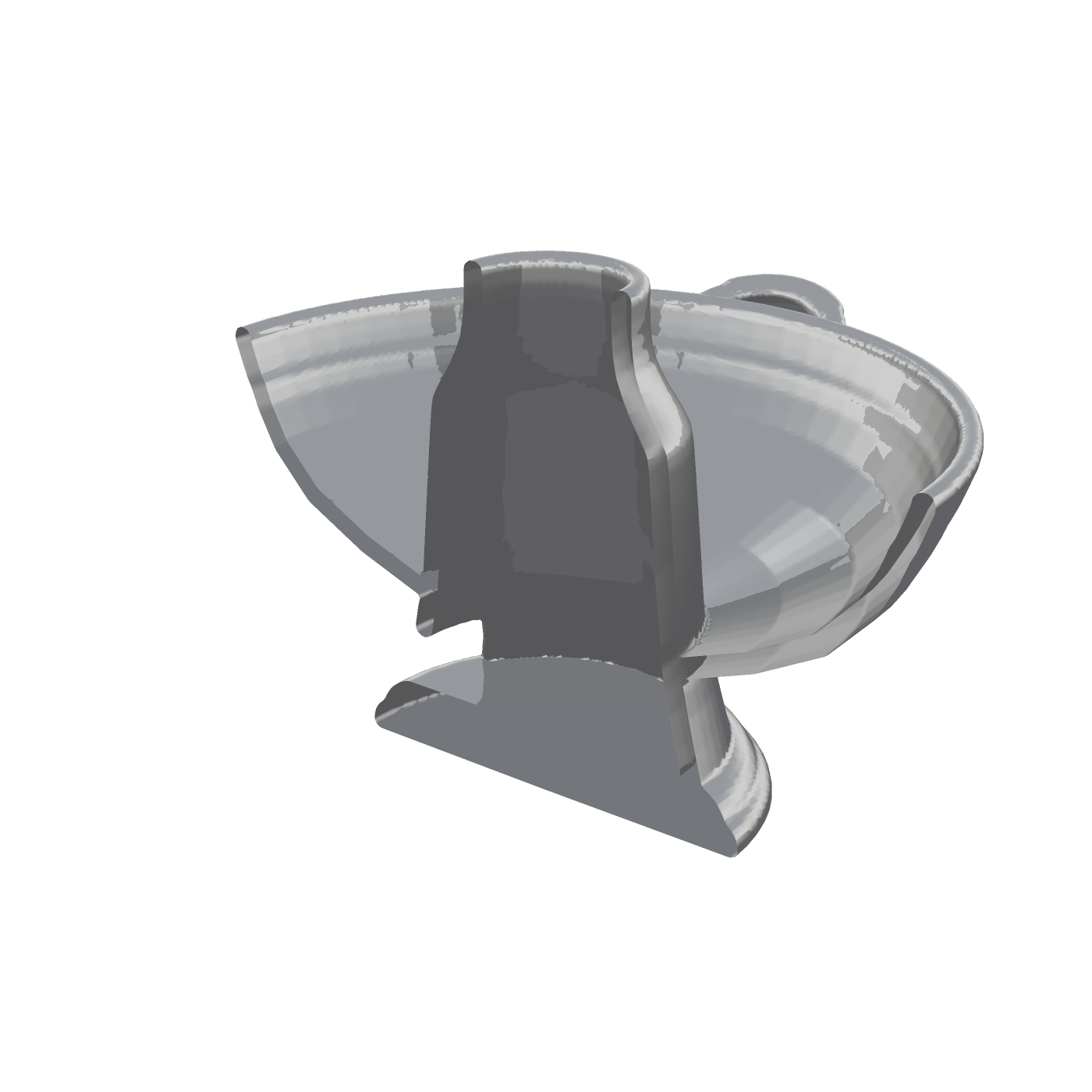}
\caption{The \emph{tom-yum pot}: an artist-authored Blender model (raw
export: 9 open shells, 232 non-manifold junction edges) ingested and
certified to a single genus-3 shell, $\beta{=}(1,6,1)$ exact
(main paper \S4). \emph{Right}: cutaway --- the narrow flue mouth caps
at small $\alpha$, so the flue interior reads as an enclosed chamber: the
on-axis H2 void the floor rule selects at $M{=}8192$ (all H1 loops sit
below the floor).}
\label{fig:tomyum}
\end{figure}

\begin{figure*}[t]
\centering
\includegraphics[width=0.9\linewidth]{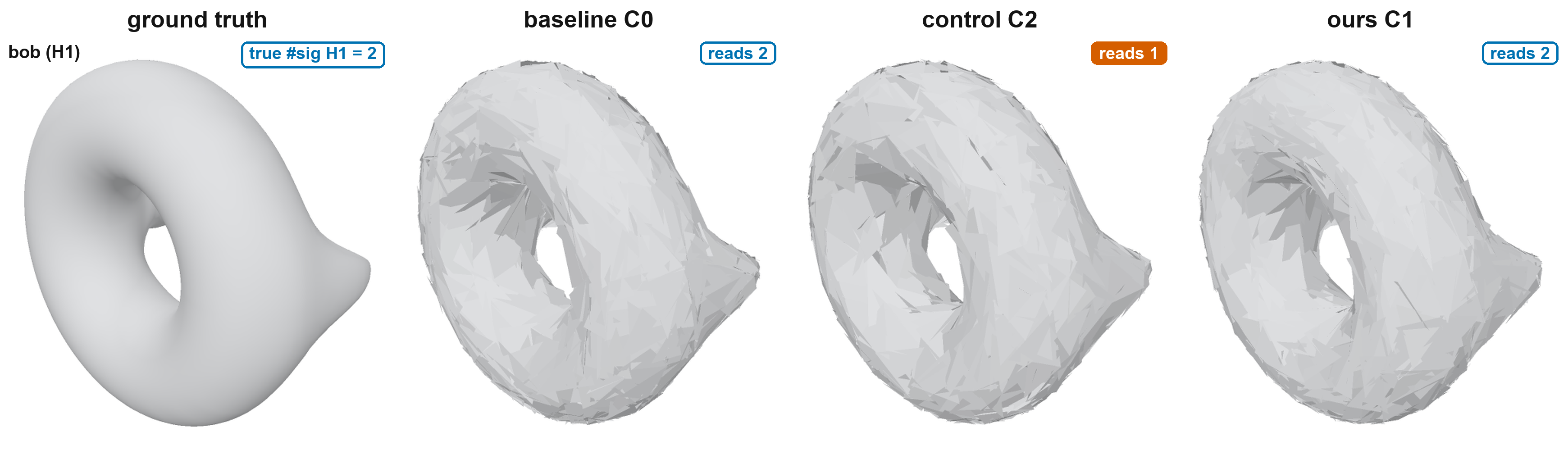}
\caption{The loop-class row of the main paper's Fig.~1 matrix: bob
(genus 1, H1). The repulsion control's reconstruction reads one
significant loop against the true two ($\beta_1\,2{\to}1$, all seeds)
at $2\times$ worse Chamfer, while baseline and loss both read two ---
the loss at half the baseline's tail error (main Table~2).}
\label{fig:bobmatrix}
\end{figure*}

\begin{figure*}[t]
\centering
\includegraphics[width=0.92\linewidth]{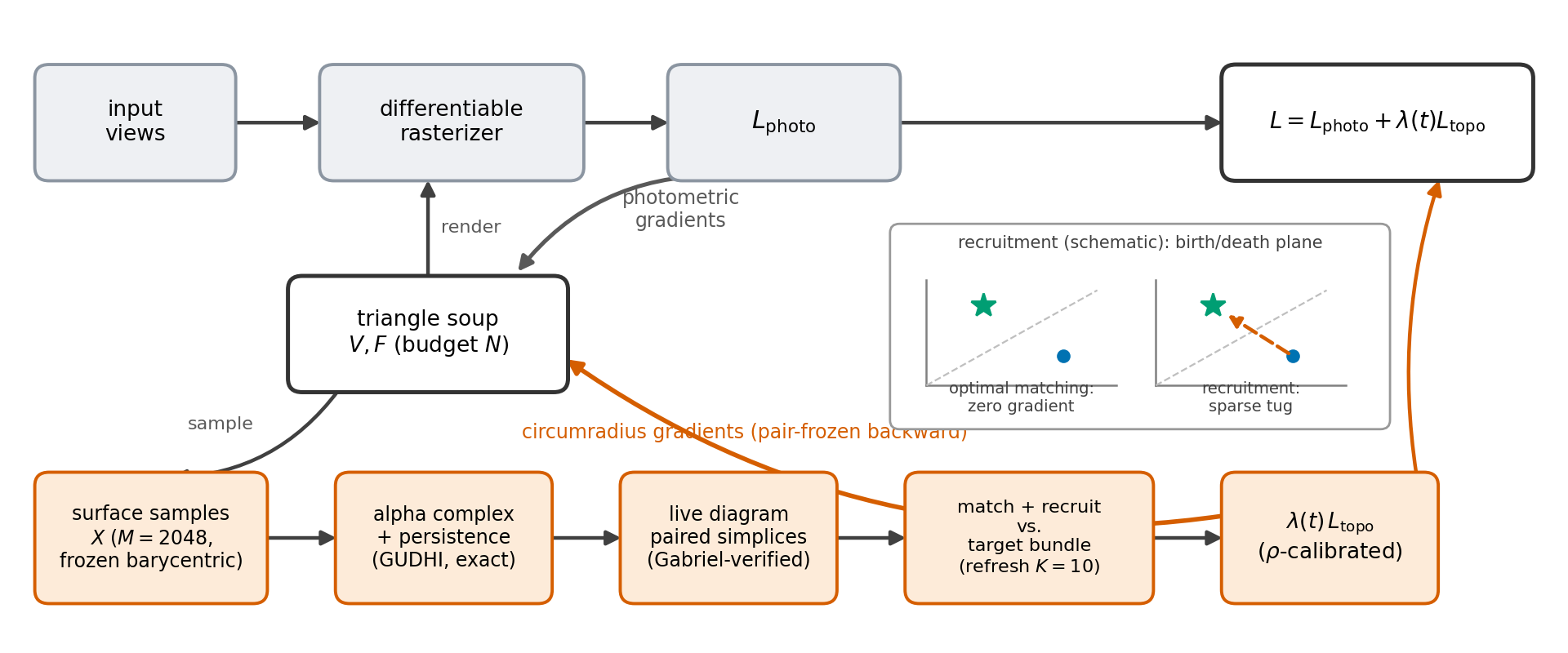}
\caption{The system, schematically. DiffSoup's photometric path (gray)
is untouched; the topological branch (orange) samples the live soup
($M{=}2048$, frozen barycentric weights), computes exact alpha-complex
persistence, matches and recruits against the fixed target bundle
(refresh every $K{=}10$ steps and after every resample), and
back-propagates through the pair-frozen circumradius re-expressions
into the same vertices (main \S3). \emph{Inset} (schematic): an absent
target bar (star) receives no gradient under optimal matching (main
Lemma~1); recruitment attaches the nearest unclaimed live bar --- the
sparse tug.}
\label{fig:pipeline}
\end{figure*}

\begin{figure}[t]
\centering
\includegraphics[width=\linewidth]{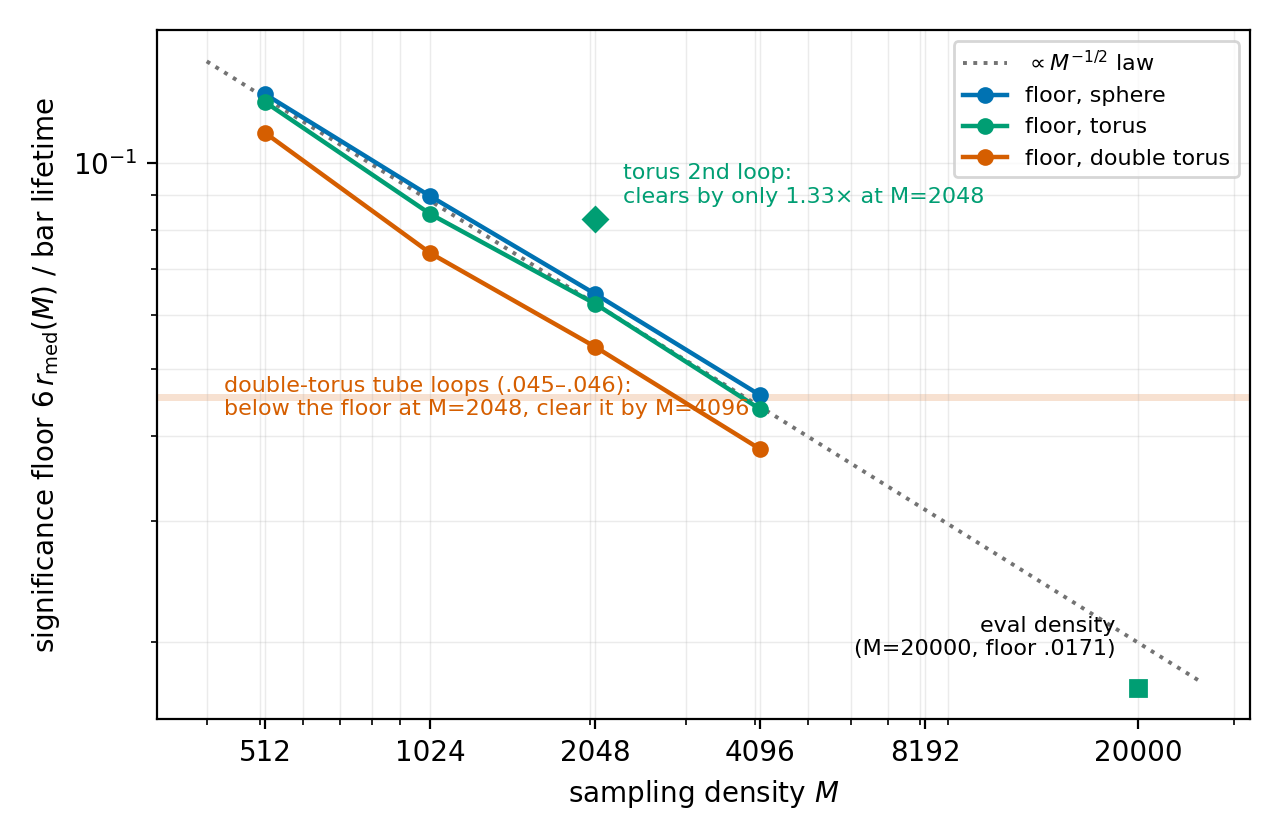}
\caption{The measurement floor, visualized (all values from the
recorded density sweep behind Table~\ref{tab:msens} and main \S6). The
floor $6\,r_{\mathrm{med}}(M)$ falls as $M^{-1/2}$ (dotted: the law
anchored at the torus $M{=}2048$ point, reaching the $M{=}20000$
evaluation floor $.0171$), while a feature's lifetime is fixed --- a
feature becomes measurable where the falling floor crosses it. The
double torus's tube loops ($.045$--$.046$ band) sit below the floor at
$M{=}2048$ and clear it by $M{=}4096$, matching Table~\ref{tab:msens}'s
counts; the torus's second loop clears by only $1.33\times$ at
$M{=}2048$ (lifetime marked at $1.33\times$ the $.0623$ floor) --- the
thin margin that live-cloud noise consumes (main \S6, C6).}
\label{fig:floor}
\end{figure}

\section{A decade of reconstruction, cut by optimization philosophy}
\label{app:survey}

The main paper's related work argues one sentence that this appendix
substantiates in breadth: reconstruction methods of the differentiable
decade, whatever their representation, optimize \emph{geometric}
objectives (photometric fidelity plus geometric regularizers) or steer
\emph{sampling} under such an objective --- topology, when correct, is
correct by construction, by extraction, or by accident, never because
the training objective measured it. Table~\ref{tab:philosophy}
tabulates representative methods under this cut, including the nearest
concurrent topology-aware entries; the subsections survey each family
and the adjacent literatures this paper draws machinery or discipline
from.

\begin{table*}[t]
\centering
\caption{Representative reconstruction methods by optimization
philosophy. ``Topology in objective?'' asks whether the \emph{training
objective} measures a topological quantity of the reconstructed
\emph{surface} (not whether the representation constrains topology, and
not image-space summaries). Bottom block: the nearest topology-aware
concurrent work (\S D.7).}
\label{tab:philosophy}
\footnotesize
\setlength{\tabcolsep}{3pt}
\begin{tabular}{lllll}
\toprule
method & repr. & philosophy & topology in objective? & topology from \\
\midrule
NeRF~\citep{mildenhall2020nerf} & density field & geometry & no & emergent \\
OccNet~\citep{mescheder2019occupancy} & occupancy & geometry & no & construction (closed iso) \\
DeepSDF~\citep{park2019deepsdf} & SDF & geometry & no & construction (closed iso) \\
NeuS~\citep{wang2021neus} & SDF & geometry & no & construction (closed iso) \\
Neuralangelo~\citep{li2023neuralangelo} & SDF & geometry & no & construction (closed iso) \\
Pixel2Mesh~\citep{wang2018pixel2mesh} & template mesh & geometry & no & frozen at template genus \\
Point2Mesh~\citep{hanocka2020point2mesh} & template mesh & geometry & no & frozen at template genus \\
Mesh R-CNN~\citep{gkioxari2019meshrcnn} & voxels $\to$ mesh & geometry & no & fixed by voxelization \\
DMTet~\citep{shen2021dmtet} & grid $+$ marching & geometry & no & extraction over the grid \\
nvdiffrec~\citep{munkberg2022nvdiffrec} & grid $+$ marching & geometry & no & extraction over the grid \\
FlexiCubes~\citep{shen2023flexicubes} & grid $+$ marching & geometry & no & extraction over the grid \\
Shape-as-Points~\citep{peng2021sap} & points $\to$ implicit & geometry & no & construction (closed iso) \\
AtlasNet~\citep{groueix2018atlasnet} & patch soup & geometry & no & unmeasured \\
DSS~\citep{yifan2019dss} & point soup & geometry & no & unmeasured \\
3DGS~\citep{kerbl2023gaussians} & Gaussian soup & geometry & no & unmeasured \\
2DGS~\citep{huang20242dgs} & surfel soup & geometry & no & unmeasured \\
SuGaR~\citep{guedon2024sugar} & Gaussians $\to$ mesh & geometry & no & post-hoc extraction \\
MILo~\citep{guedon2025milo} & Gaussians$+$mesh & geometry & no & in-loop extraction, unmeasured \\
Triangle Splatting~\citep{held2025trianglesplatting} & triangle soup & geometry & no & unmeasured \\
Radiant Tri.\ Soup~\citep{burgdorfer2025radiant} & triangle soup & geometry & no & unmeasured (proximity forces) \\
Mini-Splatting~\citep{fang2024minisplatting} & Gaussian soup & sampling & no & unmeasured (budgeted resample) \\
DiffSoup~\citep{tojo2026diffsoup} & triangle soup & geometry & no & unmeasured \\
DiffSoup $+$ prior (\S A) & triangle soup & sampling & no (steers only) & unmeasured; fails on loops \\
\midrule
Topology-GS~\citep{shen2025topologygs} & Gaussian soup & topology (image) & rendered images only & surface unmeasured \\
Gao et al.~\citep{gao2025genus,gao2026homology} & watertight mesh & geometry & no (topology a prior) & template genus / camera prior \\
STITCH~\citep{jignasu2024stitch} & implicit (points) & topology & yes ($\beta_0$, live surface) & construction $+$ constraint \\
\textbf{this work} & triangle soup & \textbf{topology} & \textbf{yes} (live-surface persistence) & measured, optimized, budgeted \\
\bottomrule
\end{tabular}
\end{table*}

\subsection{Geometry-oriented: implicit fields}
Occupancy and signed-distance learning deliver closed surfaces by
construction~\citep{mescheder2019occupancy,park2019deepsdf,peng2020convoccnet},
and a fitting stack matured around the level set: Eikonal
regularization~\citep{gropp2020igr}, surface
rendering~\citep{yariv2020idr,oechsle2021unisurf}, SDF-based volume
rendering~\citep{yariv2021volsdf,wang2021neus}, geometric
priors~\citep{yu2022monosdf}, high-resolution and accelerated
variants~\citep{li2023neuralangelo,wang2023neus2,muller2022instantngp},
meshing for deployment~\citep{yariv2023bakedsdf}, and kernel-based
point-cloud fitting~\citep{huang2023nksr}. Across all of them the
training objective is photometric or geometric; the guarantee of
consistent topology comes from the representation, its \emph{correctness}
(the right genus for the scene) is emergent, and the price is dense
field evaluation plus post-hoc extraction --- the cost regime the
soup family exists to avoid.

\subsection{Geometry-oriented: extraction, grids, and templates}
Where the surface is explicit, its topology is decided by an extraction
algorithm or frozen by a template. Marching Cubes and Dual Contouring
fix per-cell connectivity by
convention~\citep{lorensen1987marching,ju2002dual}; Poisson
reconstruction extracts an indicator iso-surface, watertight but with
emergent genus~\citep{kazhdan2006poisson,kazhdan2013screened};
differentiable iso-surfacing makes extraction trainable without making
topology a
target~\citep{remelli2020meshsdf,gao2020deftet,shen2021dmtet,munkberg2022nvdiffrec,shen2023flexicubes};
learned tessellations move the per-cell decision into a
network~\citep{chen2021nmc,chen2020bspnet}; and template or voxel-staged
pipelines fix topology before
refinement~\citep{wang2018pixel2mesh,hanocka2020point2mesh,gkioxari2019meshrcnn}.
Robust meshing gives the same guarantee-by-construction for simulation
inputs~\citep{hu2018tetwild}. In every case topology is an input or an
artifact of the algorithm --- measured nowhere, optimized never.

\subsection{Geometry-oriented: point splatting and Gaussian soups}
The connectivity-free line runs from EWA surface
splatting~\citep{zwicker2001splatting} through neural point
rendering~\citep{aliev2020npbg,ruckert2022adop,xu2022pointnerf} to the
Gaussian-splatting era~\citep{kerbl2023gaussians,yu2024mipsplatting} ---
Pulsar's abstract advertises the family's freedom from mesh ``topology
problems'' explicitly~\citep{lassner2021pulsar}. Its surface-minded
members regularize geometry ever harder --- flattened
surfels~\citep{huang20242dgs,dai2024surfels}, planar
compression~\citep{chen2024pgsr}, opacity-field
extraction~\citep{yu2024gof}, alignment-driven
meshing~\citep{guedon2024sugar} --- and MILo now extracts a mesh
differentiably at \emph{every} training
iteration~\citep{guedon2025milo}; triangle soups follow the same
pattern~\citep{held2025trianglesplatting,burgdorfer2025radiant,tojo2026diffsoup}.
Yet even with connectivity in the loop, no member of the family
measures a topological quantity of what it builds: the objectives
remain photometric fidelity plus geometric regularity.

\subsection{Sampling-oriented: steering where a fixed objective spends}
The second occupied family changes \emph{where} primitives go, not what
the objective asks of them: split-and-prune densification in
3DGS~\citep{kerbl2023gaussians}, budget-constrained Gaussian
resampling~\citep{fang2024minisplatting}, importance-driven respawn in
DiffSoup~\citep{tojo2026diffsoup} --- and the topology-inspired
allocation priors of our own study (\S A), which put the family's
implicit hypothesis to a controlled test and returned the null result
the main paper builds on. Steering allocation is a real lever for
\emph{geometric} coverage; the study shows it is structurally the wrong
lever for one-dimensional topology.

\subsection{Topology as machinery: persistence in and around learning}
The ingredients for topology-aware objectives have been maturing for two
decades: persistence computation at
scale~\citep{zomorodian2005computing,bauer2021ripser,maria2014gudhi},
practical diagram metrics~\citep{kerber2017geometry}, stability
theory~\citep{cohensteiner2007stability}, and vectorized
representations~\citep{adams2017images,bubenik2015landscapes} on the
computational side; trainable diagram
layers~\citep{hofer2017signatures,carriere2020perslay}, filtration
learning~\citep{hofer2020filtration}, and general differentiability
frameworks~\citep{leygonie2022framework,carriere2021optimizing,nigmetov2024bigsteps}
on the gradient side (surveyed
in~\citep{chazal2021introduction,carriere2026survey}). In-loop
topological losses already win in adjacent domains --- representation
learning~\citep{moor2020topoae}, tubular and topologically faithful
segmentation~\citep{shit2021cldice,stucki2023betti,hu2019topopreserving,clough2022topoloss},
microscopy volume prediction~\citep{waibel2022topological}, point-cloud
repair and shape
optimization~\citep{bruel2020toporecon,poulenard2018shapematching} --- and
differentiable non-persistence signals exist
too~\citep{roell2024dect}. What had not happened is the transfer of
this machinery into image-based surface reconstruction.

\subsection{Topology outside the loop: repair and benchmarks}
Before differentiable pipelines, topology was corrected after the fact:
spurious-handle removal~\citep{guskov2001topological,wood2004excess},
general mesh repair~\citep{attene2010repair,ju2004repair,sharf2007topologyaware}
(surveyed in~\citep{attene2013repairing}), and soup-to-manifold
conversion~\citep{huang2020manifoldplus,hu2018tetwild}; Thingi10K
quantifies how pervasive soup-like defects are in the
wild~\citep{zhou2016thingi10k} --- our tom-yum pot's raw export is a
typical specimen. The evaluation culture matches: standard pipelines
and benchmarks --- COLMAP~\citep{schonberger2016sfm,schonberger2016pixelwise},
PMVS~\citep{furukawa2010pmvs}, DTU~\citep{jensen2014dtu}, Tanks and
Temples~\citep{knapitsch2017tanks} --- score accuracy and completeness
by point distances; none carries a topological-correctness metric, the
blindness our Table~S1 quantifies on controlled probes.

\subsection{The empty cell, and its nearest neighbours}
Concurrent work has begun to press on the gap from several directions,
and the distinctions matter. Topology-GS~\citep{shen2025topologygs}
puts a persistence term \emph{inside} Gaussian-splatting training ---
but on barcodes of \emph{rendered images}, a topological-perceptual
loss; the surface's diagram is never measured, and densification adds
primitives rather than budgeting them. The high-genus inverse-rendering
line~\citep{gao2025genus,gao2026homology} reconstructs meshes whose
genus is guaranteed by a Gauss--Bonnet-matched template or guides
camera placement with homology computed once on a reference shape ---
topology as a \emph{prior}, assumed rather than measured, with
ground-truth genus required up front.
STITCH~\citep{jignasu2024stitch} does measure a differentiable
persistence quantity on a live implicit surface --- enforcing a single
connected component --- but from point-cloud input: no images, no
primitives, no budget. Topology also enters generation as conditioning
features~\citep{hu2024topodiffusion}, again outside the reconstruction
objective. Partition the three requirements --- (i) a topological
quantity of the evolving \emph{surface} measured differentiably in the
objective, (ii) image-based reconstruction through a renderer, (iii) a
fixed primitive budget --- and each neighbour holds at most one; none
holds all three. That cell --- the bottom row of
Table~\ref{tab:philosophy} --- is what the main paper fills.

\section{Per-seed values and calibration sweeps}
\label{app:perseed}

The main paper's reporting policy (\S 4) quotes Welch $\sigma$
separations only where both arms ran five seeds; every three-seed
verdict instead rests on disjoint per-seed ranges.
Table~\ref{tab:perseed} lists the underlying per-seed tail bottlenecks
for the generality wave and the three-seed analytic shapes. In every
row except rocker-arm's (the no-headroom null, main \S5.2) the largest
C1 seed is well below the smallest C0 seed; identical values across
seeds --- C2's, and the group wave's pinned C0s --- are genuine
(the unmatched-target bound, main paper \S\S5.2, 6; suppl.\ \S G), not
rounding.

\begin{table*}[t]
\centering
\caption{Per-seed tail bottlenecks (seeds 0/1/2, run order) behind the
three-seed verdicts. Spot's C1 value repeats across seeds at this
precision; the pot rows are the quicklook values behind main
Table~2.}
\label{tab:perseed}
\footnotesize
\setlength{\tabcolsep}{4pt}
\begin{tabular}{llll}
\toprule
shape (dim) & C0 & C1 loss & C2 control \\
\midrule
spot (H2)    & .0429 / .0584 / .0550 & .0237 / .0237 / .0237 & .0652 $\times$3 \\
bob (H1)     & .0429 / .0413 / .0437 & .0200 / .0207 / .0216 & .0437 $\times$3 \\
fandisk (H2) & .0540 / .0515 / .0559 & .0058 / .0055 / .0042 & .0571 $\times$3 \\
tom-yum pot (H2) & .0218 / .0223 / .0223 & .0059 / .0056 / .0056 & .0223 $\times$3 \\
rocker-arm (H1) & .0088 / .0082 / .0079 & .0090 / .0070 / .0100 & .0112 / .0112 / .0115 \\
eight (H1)   & .0249 $\times$3 & .0169 / .0175 / .0142 & .0249 $\times$3 \\
armadillo (H2) & .0511 $\times$3 & .0124 / .0125 / .0123 & .0511 $\times$3 \\
horse (H2)   & .0408 $\times$3 & .0124 / .0083 / .0118 & .0408 $\times$3 \\
cube (H2)    & .0560 / .0551 / .0636 & .0071 / .0064 / .0086 & .1346 $\times$3 \\
two spheres (H0) & .0060 / .0074 / .0062 & .0007 / .0009 / .0005 & .0009 / .0009 / .0006 \\
\bottomrule
\end{tabular}
\end{table*}

Table~\ref{tab:knobs} tabulates the two calibration-knob sweeps whose
verdicts the main paper states in \S 3.4 and \S 5.1: the $\rho$
response is flat over a $10\times$ range, and the ramp window is
immaterial --- all alternatives land within seed noise of the default.

\begin{table}[t]
\centering
\caption{Calibration-knob sweeps (sphere, tail bottleneck). \emph{Top}:
$\rho$ response, matched seed-0 sweep --- the five-seed $\rho{=}0.1$ arm
is main Table~1's C1 ($.0156{\pm}.0013$). \emph{Bottom}: ramp-window
pilot, three seeds each.}
\label{tab:knobs}
\footnotesize
\setlength{\tabcolsep}{3pt}
\begin{tabular}{lccc}
\toprule
$\rho$ (seed 0) & $0.03$ & $0.1$ & $0.3$ \\
\midrule
tail & .0181 & .0172 & .0177 \\
\midrule
ramp window & $10$--$40\%$ & $20$--$50\%$ (default) & $30$--$60\%$ \\
\midrule
tail & .0154$\pm$.0017 & .0156$\pm$.0013 & .0179$\pm$.0019 \\
\bottomrule
\end{tabular}
\end{table}

\section{The open-surface probe}
\label{app:probe}

Two open bowls are cut from the analytic sphere's ground-truth mesh by
removing a polar cap (measured openings $0.14\,R$ and $0.50\,R$), then
certified by measurement as disk topology: one shell, exactly one
boundary loop, edge-manifold, consistently oriented, Euler
characteristic $1$. The pre-registered bundle rule (the pot's) applies
unchanged: the density staircase decides the observable and density ---
and it overruled our own expectation. We predicted the wide bowl would
lose its enclosed void to the rim; instead \emph{both} bowls read
exactly one significant H2 bar at every tested density (the mouth caps
at small $\alpha$, so the interior reads as a chamber --- the pot's
mechanism on an open surface), while the \emph{designed} rim's H1 bar
never clears the floor (Table~\ref{tab:probestair}). The rule therefore
sets H2 at $M{=}2048$ for both, budget $N{=}1200$, loss restricted to
H2 (the torus precedent for sub-floor features). A designed rim is a
\emph{real} feature of the open target, distinct from the noise-born
boundary bars real scans will add --- the bar-filtering step named in
main \S7 concerns the latter.

\begin{table}[t]
\centering
\caption{Probe staircase (seed 0): top bar lifetime over the floor, per
dimension. Both bowls read exactly $(0,0,1)$ significant bars at every
density --- one H2 void, no significant H1 --- so the table gives the
margins.}
\label{tab:probestair}
\scriptsize
\setlength{\tabcolsep}{3pt}
\begin{tabular}{lccc}
\toprule
top bar / floor & $M{=}2048$ & $M{=}4096$ & $M{=}8192$ \\
\midrule
narrow: H2 / H1 & $3.79\times$ / $0.47\times$ & $5.50\times$ / $0.57\times$ & $7.82\times$ / $0.45\times$ \\
wide: H2 / H1   & $2.34\times$ / $0.42\times$ & $3.50\times$ / $0.51\times$ & $4.94\times$ / $0.46\times$ \\
\bottomrule
\end{tabular}
\end{table}

The C-matrix (C0/C1/C2, three seeds, protocol otherwise unchanged)
passes the verdict rule on both bowls at better-than-baseline Chamfer
(Table~\ref{tab:probe}): the loss cuts the void's error $4.2\times$
(narrow) and $4.1\times$ (wide) with disjoint per-seed ranges, inside
the study's $1.5$--$10.4\times$ span. The control replicates its
closed-surface failure modes: on the narrow bowl it pins the void at
its repulsion equilibrium ($2.1\times$ worse than baseline, identical
across seeds) at $3.5\times$ worse Chamfer; on the wide bowl it
\emph{erases} the void in two of three seeds ($\beta_2\,1{\to}0$) at
$3.1\times$ worse Chamfer. Every C0/C1 seed reads the correct count.

\begin{table}[t]
\centering
\caption{Open-surface probe C-matrix: tail bottleneck (mean$\pm$sd, 3
seeds; per-seed values in run order beneath); $\times$ = reduction vs
C0. Both shapes \textbf{pass} the verdict rule; no Welch $\sigma$ at
$n{=}3$ (main \S4).}
\label{tab:probe}
\scriptsize
\setlength{\tabcolsep}{2pt}
\resizebox{\linewidth}{!}{%
\begin{tabular}{llll}
\toprule
shape & C0 & C1 loss & C2 control \\
\midrule
bowl$_{\mathrm{narrow}}$ & .0588$\pm$.0035 & \textbf{.0138$\pm$.0013} (4.2$\times$) & .1249 (pinned) \\
\;\footnotesize per-seed & \footnotesize .0619/.0550/.0593 & \footnotesize .0137/.0126/.0151 & \footnotesize .1249 $\times$3 \\
bowl$_{\mathrm{wide}}$ & .0560$\pm$.0098 & \textbf{.0137$\pm$.0003} (4.1$\times$) & .0770 ($\beta_2\,1{\to}0$, 2/3) \\
\;\footnotesize per-seed & \footnotesize .0448/.0602/.0630 & \footnotesize .0141/.0136/.0135 & \footnotesize .0770 $\times$3 \\
\bottomrule
\end{tabular}}
\end{table}

\section{The group wave: expanding the external set to two groups}
\label{app:groupwave}

The generality wave of main \S5.2 grew from four external meshes to
eight, organized as the study's two observable classes --- the
\emph{loop group} (H1: bob, rocker-arm, eight) and the \emph{void
group} (H2: spot, fandisk, the tom-yum pot, armadillo, horse) --- under
a pre-registered plan (\texttt{GROUP\_WAVE\_PLAN.md}, shipped with the
release bundle) that fixed the candidate pool, the selection rule, the
run protocol, and the group statistic before any training run. Per-seed
tails for the new shapes join suppl.\ Table~S7; this section records
where the shapes came from and how they were selected, and
Fig.~\ref{fig:gwmatrix} renders all four under the front-page
convention.

\begin{figure*}[t]
\centering
\includegraphics[width=0.72\textwidth]{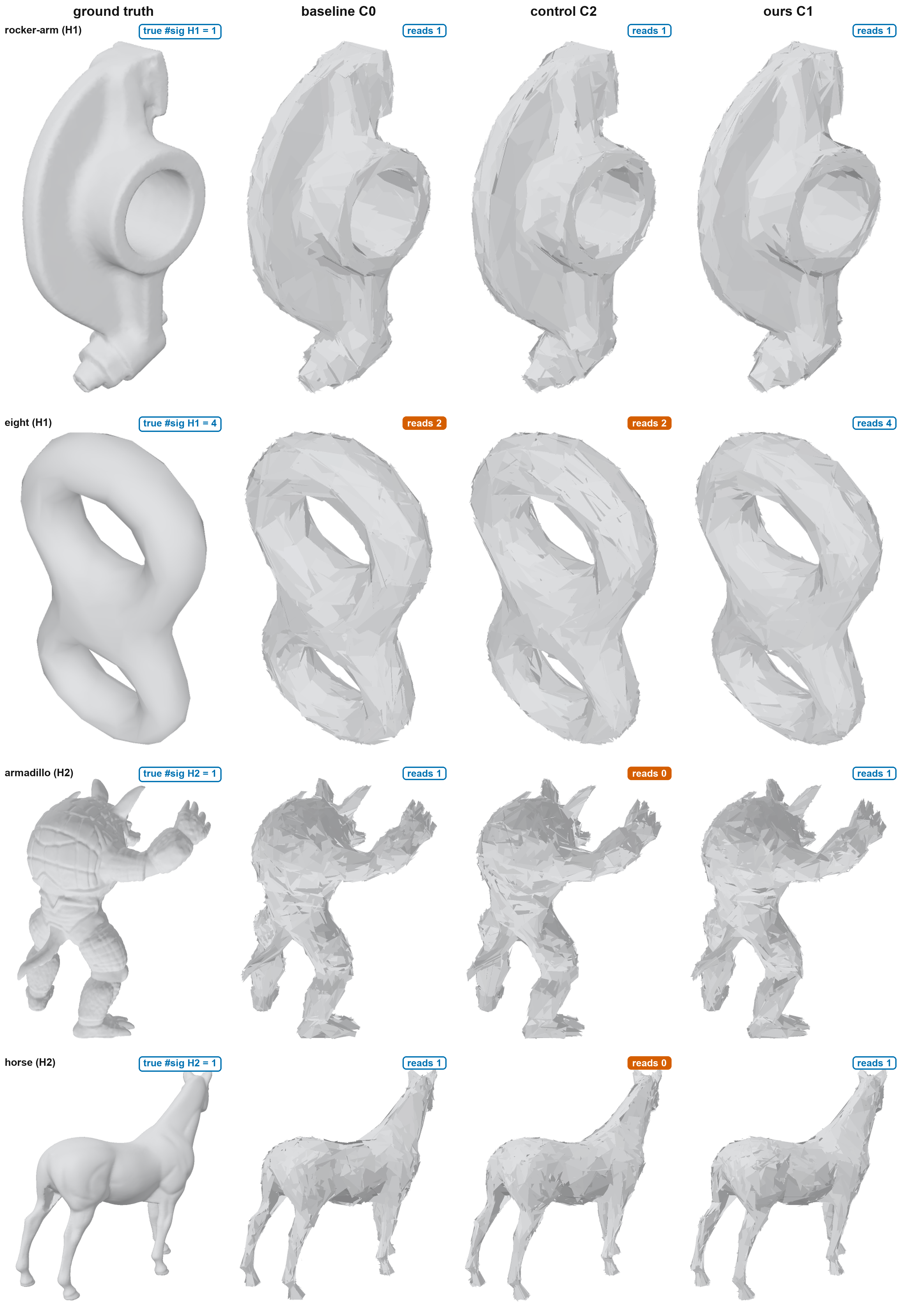}
\caption{The group wave, rendered (ground truth / C0 / C2 / C1; seed-0
final soups; material, per-row camera and badge convention of main
Fig.~1). \emph{rocker-arm}: the no-headroom null --- every arm reads
the one measurable loop correctly (all badges blue; the control's
damage here is value-only). \emph{eight}: baseline and control read
only two of the four certified loops while the loss restores all four
(per-seed C1 counts $4/3/4$; seed 0 shown) --- and the missing pair is
invisible in the render, both holes being open in every cell: the
deficit is a sub-floor diagram reading, exactly the blindness the
paper measures. \emph{armadillo}/\emph{horse}: correct-count baselines
whose void \emph{value} pins at the unmatched bound (main Table~2);
the control erases both voids (reads 0) with no visible geometric
collapse.}
\label{fig:gwmatrix}
\end{figure*}

\paragraph{Pool, certificates, honest exclusions (Table~\ref{tab:gwdisp}).}
Candidates were drawn from a standard test-model collection
(\texttt{alecjacobson/common-3d-test-models}, pinned commit) plus the
CGAL repository's genus-2 \emph{eight}; the release bundle ships fetch
scripts with pinned commits and the certificates, not the meshes ---
redistribution terms remain the original repositories'.
Every candidate passed or failed the pot's mesh certificate (main \S4:
exact simplicial homology cross-checked against per-component Euler
characteristics, watertightness, orientation) as-is --- failures were
excluded, never repaired, keeping the wave's class "watertight
benchmark" (the pot remains the study's one ingest-and-solidify story).
Five of eleven failed: open boundaries or non-manifold junctions in
beast, cow, max-planck, ogre, and the Utah teapot.

\paragraph{Staircase selection, before training (Table~\ref{tab:gwstair}).}
The density staircase (the pot's pre-registered rule, unchanged) fixed
each certified shape's observable class and working density:
\emph{rocker-arm} (genus 1) enters the loop group on the torus
precedent --- one of its two loops clears the $M{=}2048$ floor
($0.0929$ vs $0.0613$), the second ($0.0204$) reaches only
$1.15\times$ the floor even at $M{=}20{,}000$, and the void is
sub-floor at 2048, so the loss is H1-restricted;
\emph{eight} (genus 2) enters on the pot's precedent --- the full
exact $(1,4,1)$ is significant from $M{=}4096$, and the bundle is
built at $M{=}8192$ where the fourth loop clears $1.69\times$ and the
void $2.6\times$; \emph{armadillo} and \emph{horse} (genus 0) read the
exact $(1,0,1)$ at every tested density (voids $1.68\times$/$1.42\times$
over the 2048 floor).
Two certified candidates were set aside at this stage, before any
training: \emph{cheburashka} qualifies at 2048 but grows a second
significant capping H2 bar from $M{=}8192$ ($0.0300$ vs floor $0.0271$;
$0.0312$ vs $0.0171$ at 20k) --- a pot-flue-type concavity reading that
would make the evaluation-density phantom check ambiguous --- and a
denser retessellation of eight (license unstated at its source) whose
staircase independently corroborates the $(1,4,1)$ observable at
$M{=}8192$.

\paragraph{Outcome identities behind the pins.}
The three seed-exact baseline values of main Table~2 are unmatched-target
bounds, not coincidences: eight's C0/C2 pin $0.0249 \approx$ half the
third loop's ground-truth lifetime at the evaluation density
($0.0498/2$), armadillo's $0.0511 = 0.1022/2$, horse's
$0.0408 \approx 0.0817/2$ --- the bottleneck cost of leaving that
target bar unmatched (main \S6's pin mechanism, previously seen in the
pot's control arm).

\begin{table}[t]
\centering
\caption{Group-wave staircases (seed 0): significant Betti counts vs
sampling density for the four selected shapes; bold = the pre-registered
working density. rocker-arm's second loop and 2048-density void never
clear the floor (loss H1-restricted, torus precedent); eight's full
$(1,4,1)$ holds from 4096 (bundle at 8192, pot precedent).}
\label{tab:gwstair}
\scriptsize
\setlength{\tabcolsep}{4pt}
\begin{tabular}{lcccc}
\toprule
sig $(\beta_0,\beta_1,\beta_2)$ & $M{=}2048$ & $4096$ & $8192$ & $20000$ \\
\midrule
rocker-arm & \textbf{(1,1,0)} & (1,1,1) & (1,1,2) & (1,2,2) \\
eight      & (1,2,1) & (1,4,1) & \textbf{(1,4,1)} & (1,4,1) \\
armadillo  & \textbf{(1,0,1)} & (1,0,1) & (1,0,1) & (1,0,1) \\
horse      & \textbf{(1,0,1)} & (1,0,1) & (1,0,1) & (1,0,1) \\
\bottomrule
\end{tabular}
\end{table}

\begin{table}[t]
\centering
\caption{Candidate disposition (selection fixed before training;
plan file in the release bundle). Certificate failures are excluded,
never repaired; the two staircase-stage set-asides are motivated in
the text.}
\label{tab:gwdisp}
\scriptsize
\setlength{\tabcolsep}{3pt}
\resizebox{\columnwidth}{!}{%
\begin{tabular}{llll}
\toprule
candidate & origin & certificate & decision \\
\midrule
rocker-arm & INRIA / AIM@SHAPE & pass (g1) & loop group \\
eight & CGAL data (CC0-class) & pass (g2) & loop group \\
armadillo & Stanford scan rep. & pass (g0) & void group \\
horse & CyberWare / GaTech & pass (g0) & void group \\
cheburashka & I.~Baran (Pinocchio) & pass (g0) & set aside \\
eight (dense) & Directional (lib.) & pass (g2) & set aside \\
beast & Autodesk & \textbf{fail}: non-manifold & --- \\
cow & Viewpoint/Sun & \textbf{fail}: odd $\chi$ & --- \\
max-planck & MPI & \textbf{fail}: 161 open edges & --- \\
ogre & K.~Ritchie & \textbf{fail}: 200 open edges & --- \\
teapot & M.~Newell & \textbf{fail}: 160 open edges & --- \\
\bottomrule
\end{tabular}%
}
\end{table}

\section{Robustness of the reporting choices}
\label{app:robustness}

\begin{figure*}[t]
\centering
\includegraphics[width=\textwidth]{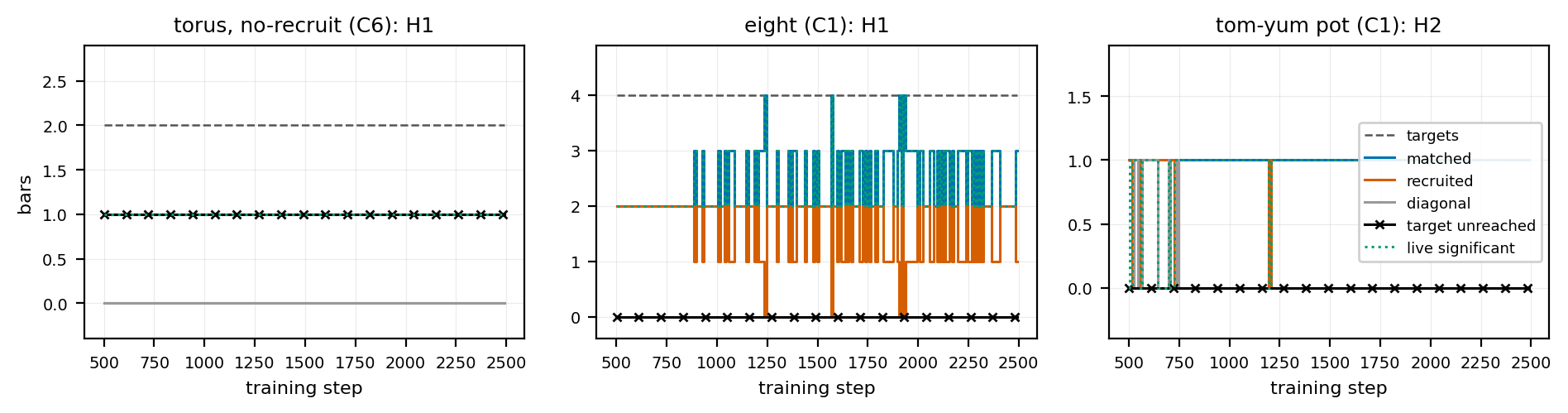}
\caption{Per-refresh plan composition from the recorded refresh logs
(seed 0; ${\sim}200$ refreshes each; no new runs). \emph{Left}: torus
without recruitment (C6) --- one loop matched, one target unreached at
every refresh: the chronic gap behind the C6 collapse (main \S5.1).
\emph{Middle}: eight (C1) --- recruitment claims the two sub-threshold
loops early and hands them to the matched set as training makes them
significant. \emph{Right}: the tom-yum pot (C1) --- a stable one-bar
matched plan after the first ${\sim}250$ active steps. Dashed gray =
target count; in the right panel it coincides with the matched line.}
\label{fig:pairstab}
\end{figure*}

\paragraph{The floor multiplier is not load-bearing.}
The significance floor $k\,r_{\mathrm{med}}(M)$ uses $k{=}6$ throughout.
Recomputing every documented bundle decision from the recorded
ground-truth clouds (protocol densities, bundle seed) gives each
feature's flip point --- the $k$ at which an included feature would
leave the bundle or an excluded one would enter --- and \emph{every
decision in the study is unchanged for
$k \in (5.12,\, 6.77)$} (Table~\ref{tab:floorwin}). The two binding
features are exactly the ones the paper already flags as marginal: the
double torus's sub-floor tube loops ($0.85\times$ the $k{=}6$ floor)
bound it below, bob's fat-tube void ($1.13\times$) bounds it above.
The verdict metric itself (bottleneck to the target diagram) contains
no threshold; $k$ only gates which bars enter bundles and counts.

\begin{table}[t]
\centering
\caption{Floor-multiplier windows per shape (recorded GT clouds at the
protocol bundle density): the tightest included margin and, where the
protocol excludes sub-floor features, the largest excluded margin ---
each decision holds while $k$ stays below/above $6\times$ that margin.
Global window: $k \in (5.12, 6.77)$.}
\label{tab:floorwin}
\scriptsize
\setlength{\tabcolsep}{2.5pt}
\begin{tabular}{lcc}
\toprule
shape & tightest IN ($\times$ floor) & largest OUT ($\times$ floor) \\
\midrule
sphere / cube    & 3.86 / 2.71 & --- \\
torus            & 1.33 (loop 2) & 0.83 (void) \\
two spheres      & 2.61 & --- \\
double torus     & 2.23 & 0.85 (tube loops) \\
spot / fandisk   & 1.87 / 1.50 & --- \\
bob              & 1.13 (void) & --- \\
tom-yum pot      & 1.22 & 0.46 (loops) \\
rocker-arm       & 1.52 & 0.74 (void) \\
eight            & 1.69 (loop 4) & --- \\
armadillo / horse & 1.68 / 1.42 & --- \\
bowls (narrow/wide) & 3.79 / 2.34 & 0.47 / 0.42 (rims) \\
\bottomrule
\end{tabular}
\end{table}

\paragraph{The pair-frozen plan is stable across refreshes
(Fig.~\ref{fig:pairstab}).}
Every run since the refresh-log schema records the plan's composition
(matched / diagonal / recruited / target-unreached, per dimension) at
each of its ${\sim}200$ refreshes. Three regimes cover the study:
the torus without recruitment (C6) shows the chronic gap the ablation
collapses on --- one matched loop and one unreached target at
\emph{every} refresh; on the genus-2 eight, recruitment claims the two
sub-threshold loops from the first active refresh and \emph{hands them
over} to the matched set as training makes them significant (matched
$2{\to}3$--$4$ late, recruited $2{\to}1$--$0$, unreached $0$
throughout); the tom-yum pot settles to a stable one-bar matched plan
within the first ${\sim}250$ active steps. No plan oscillation beyond
single-bar hand-overs is observed, and the per-refresh Gabriel-failure
count is zero in all three runs, as everywhere (main \S3.2).

\paragraph{Cost scaling, measured.}
The persistence step --- the refresh's dominant cost --- measured on
the torus cloud: $174$\,ms at $M{=}2048$ (reproducing \S3.2's recorded
refresh figure), $388$\,ms at $4096$, $740$\,ms at $8192$, $1.92$\,s at
$20{,}000$ --- an $11.0\times$ cost for $9.8\times$ the points,
consistent with the near-linear $O(M\log M)$ scaling.
GPU memory: a single-seed probe (torus, $2{,}500$ steps, whole-GPU used
memory sampled at $400$\,ms) peaks at $3{,}649$\,MiB for the baseline
and $4{,}035$\,MiB with the loss ($+386$\,MiB); the persistence
computation itself runs on the CPU (GUDHI), off the GPU budget.

\end{document}